\documentclass [11pt] {article}

 \usepackage{fullpage}

\usepackage{graphics}
\usepackage[dvips]{epsfig}

\usepackage{amsmath}
\usepackage{amssymb}
\usepackage{amsfonts}
\usepackage{graphicx}

\usepackage{cite}

\usepackage{array}
\usepackage{color}
\usepackage{algorithm}
\usepackage{algorithmic}

\usepackage[english]{babel}

\usepackage{caption,subcaption}
\usepackage{multirow}
\usepackage{rotating}

\begin{document}

\newtheorem{theorem}{Theorem}[section]
\newtheorem{lemma}{Lemma}[section]
\newtheorem{corollary}{Corollary}[section]
\newtheorem{claim}{Claim}[section]
\newtheorem{proposition}{Proposition}[section]
\newtheorem{definition}{Definition}[section]
\newtheorem{fact}{Fact}[section]
\newtheorem{example}{Example}[section]

\newcommand{\cA}{{\cal A}}
\newcommand{\cB}{{\cal B}}
\newcommand{\cC}{{\cal C}}
\newcommand{\cZ}{{\cal Z}}
\newcommand{\cG}{{\cal G}}
\newcommand{\cN}{{\cal N}}
\newcommand{\cU}{{\cal U}}
\newcommand{\cT}{{\cal T}}
\newcommand{\cS}{{\cal S}}
\newcommand{\cP}{{\cal P}}
\newcommand{\cL}{{\cal L}}
\newcommand{\cV}{{\cal V}}
\newcommand{\loc}{{\cal LOCAL}}
\newcommand{\cY}{{\cal Y}}
\newcommand{\ai}{\alpha_i}
\newcommand{\bi}{\beta_i}
\newcommand{\gi}{\gamma_i}
\newcommand{\di}{\delta_i}

\newcommand{\oai}{\overline{\alpha}_i}
\newcommand{\obi}{\overline{\beta}_i}
\newcommand{\ogi}{\overline{\gamma}_i}
\newcommand{\odi}{\overline{\delta}_i}

\newcommand{\qed}{\hfill $\square$ \smallbreak}
\newenvironment{proof}{\noindent{\bf Proof:}}{\qed}

\newcommand{\algo}[1]{
\medskip
\noindent \textbf{Algorithm {\tt #1}}\\
\nopagebreak}

\newcommand{\procedure}[1]{
\medskip
\noindent \textbf{Procedure {\tt #1}}
\nopagebreak \\}

\newcommand{\procend}{\hfill $\diamond$\medskip}

\newcommand{\oddRepair}{{\tt Odd\-Repair}}
\newcommand{\Deactivate}{{\tt Deactivate}}
\newcommand{\evenARepair}{{\tt Even\-Al\-most\-Re\-pair}}
\newcommand{\ringThree}{{\tt Ring\-Three\-Co\-lo\-ring}}
\newcommand{\ringLearning}{{\tt Ring\-Lear\-ning}}
\newcommand{\Elect}{{\tt Elect}}



\def\thefootnote{\fnsymbol{footnote}}

\title{{\bf Short Labeling Schemes for \\
 Topology Recognition in Wireless Tree Networks}}

\author{Barun Gorain\footnotemark[1]
\and Andrzej Pelc\footnotemark[2]
}

\footnotetext[1]{D\'epartement d'informatique, Universit\'e du Qu\'ebec en Outaouais, Gatineau,
Qu\'ebec J8X 3X7, Canada. {\tt baruniitg123@gmail.com}}

\footnotetext[2]{
 D\'epartement d'informatique, Universit\'e du Qu\'ebec en Outaouais, Gatineau,
Qu\'ebec J8X 3X7, Canada. {\tt pelc@uqo.ca}. Partially supported by NSERC discovery grant 8136--2013
and by the Research Chair in Distributed Computing at the
Universit\'e du Qu\'ebec en Outaouais.}

\maketitle

\thispagestyle{empty}

\begin{abstract}
We consider the problem of topology recognition in wireless (radio) networks modeled as undirected graphs. Topology recognition is a fundamental task in which every
node of the network has to output a map of the underlying graph i.e., an isomorphic copy of it,  and situate itself in this map. In wireless networks, nodes communicate in synchronous rounds. In each round a node can either transmit a message to all its neighbors, or stay silent and listen. At the receiving end, a node $v$ hears a message from a neighbor $w$ in a given round, if $v$ listens in this round, and if $w$ is its only neighbor that transmits in this round. Nodes have  labels which are (not necessarily different)
binary strings. The length of a labeling scheme is the largest length of a label.
We concentrate on wireless networks modeled by trees, and we investigate two problems.
\begin{itemize}
\item
What is the shortest labeling scheme that permits topology recognition in all wireless tree networks of diameter $D$ and maximum degree $\Delta$?
\item
What is the fastest topology  recognition algorithm working for all wireless tree networks of diameter $D$ and maximum degree $\Delta$, using such a short labeling scheme?
\end{itemize}
We are interested in deterministic topology recognition algorithms.
For the first problem, we show that the minimum length of a labeling scheme allowing  topology recognition in all trees of maximum degree $\Delta \geq 3$ is
$\Theta(\log\log \Delta)$. For such short schemes, used by an algorithm working for the class of trees of diameter $D\geq 4$ and maximum degree $\Delta \geq 3$, we show
almost matching bounds on the time of topology recognition: an upper bound $O(D\Delta)$, and a lower bound $\Omega(D\Delta^{\epsilon})$, for any constant $\epsilon<1$.

Our upper bounds are proven by constructing a topology recognition algorithm using a labeling scheme of length $O(\log\log \Delta)$ and using time $O(D\Delta)$.
Our lower bounds are proven by constructing a class of trees for which any topology recognition algorithm must use a labeling scheme of length at least $\Omega(\log \log\Delta)$, and a class of trees for which any topology recognition algorithm using a labeling scheme of length $O(\log\log \Delta)$ must use time at least $\Omega(D\Delta^{\epsilon})$, on some tree of this class.

\vspace*{0.5cm}

\noindent
{\bf keywords:} topology recognition, wireless network, labeling scheme, feasibility, tree, time

\vspace*{0.5cm}
\end{abstract}

\pagebreak

\section{Introduction}

\subsection{The model and the problem}

Learning the topology of an unknown network by its nodes is a fundamental distributed task in networks.
Every node of the network has to output a map of the underlying graph, i.e., an isomorphic copy of it,  and situate itself in this map.
Topology recognition can be considered as a preprocessing procedure to many other distributed algorithms which require the knowledge of important
parameters of the network, such as its size, diameter or maximum degree. It can also help to determine the feasibility of some tasks that
depend, e.g., on symmetries existing in the network.

We consider wireless networks, also known as radio networks. Such a network is modeled as a simple undirected connected graph $G=(V,E)$. As it is usually assumed in the algorithmic theory of radio networks \cite{CGR,GPPR,GPX},  all nodes start simultaneously and communicate in synchronous rounds. In each round, a node can either transmit a message to all its neighbors, or stay silent and listen. At the receiving end, a node $v$ hears a message from a neighbor $w$ in a given round, if $v$ listens in this round, and if $w$ is its only neighbor that transmits in this round. We do not assume collision detection: if more than one neighbor of a node $v$ transmits in a given round, node $v$ does not hear anything
(except the background noise that it also hears when no neighbor transmits).

In this paper, we restrict attention to wireless networks modeled by trees, and we are interested in deterministic topology recognition algorithms.
Topology recognition is formally defined as follows. Every node $v$ of a tree $T$ must output a tree $T'$ and a node $v'$ in this tree, such that there exists
an isomorphism $f$ from $T$ to $T'$, for which $f(v)=v'$.
Topology recognition is impossible, if nodes do not have any a priori assigned labels, because then any deterministic algorithm forces all nodes
to transmit in the same rounds, and no communication is possible.
Hence we consider labeled networks. A {\em labeling scheme} for a network represented by a tree $T=(V,E)$ is any function $\cL$ from the set $V$ of nodes into the set $S$ of finite binary strings. The string $\cL(v)$ is called the label of the node $v$.
Note that labels assigned by a labeling scheme are not necessarily distinct. The {\em length} of a labeling scheme $\cL$ is the maximum length of any label assigned by it.

We investigate two problems.
\begin{itemize}
\item
What is the shortest labeling scheme that permits topology recognition in all wireless tree networks of diameter $D$ and maximum degree $\Delta$?
\item
What is the fastest topology  recognition algorithm working for all wireless tree networks of diameter $D$ and maximum degree $\Delta$, using such a short labeling scheme?
\end{itemize}

\subsection{Our results}

For the first problem, we show that the minimum length of a labeling scheme allowing  topology recognition in all trees of maximum degree $\Delta \geq 3$ is
$\Theta(\log\log \Delta)$. For such short schemes, used by an algorithm working for the class of trees of diameter $D\geq 4$ and maximum degree $\Delta \geq 3$, we show
almost matching bounds on the time of topology recognition: an upper bound $O(D\Delta)$, and a lower bound $\Omega(D\Delta^{\epsilon})$, for any constant $\epsilon<1$.

Our upper bounds are proven by constructing a topology recognition algorithm using a labeling scheme of length $O(\log\log \Delta)$ and using time $O(D\Delta)$.
Our lower bounds are proven by constructing a class of trees for which any topology recognition algorithm must use a labeling scheme of length at least $\Omega(\log \log\Delta)$, and a class of trees for which any topology recognition algorithm using a labeling scheme of length $O(\log\log \Delta)$ must use time at least $\Omega(D\Delta^{\epsilon})$, on some tree of this class.

These main results are complemented by establishing complete answers to both problems for very small values of $D$ or $\Delta$.
For trees of  diameter $D=3$ and maximum degree $\Delta \geq 3$, the fastest topology recognition algorithm using a shortest possible scheme (of length
$\Theta(\log\log \Delta)$)  works in time $\Theta(\frac{\log \Delta}{\log\log \Delta})$. The same holds for trees of diameter $D=2$ and maximum degree {\em at most} $\Delta$,
for $\Delta \geq 3$.
Finally, if $\Delta=2$, i.e., for the class of lines,
the shortest labeling scheme permitting topology recognition is of constant length, and the best time of topology recognition using such a scheme for lines
of diameter (length) at most $D$ is $\Theta(\log D)$.

Our results should be contrasted with those from \cite{FPP}, where topology recognition was studied in a different model. The authors of \cite{FPP} considered wired networks in which there are port numbers at each node, and communication proceeds according to the $\cal{LOCAL}$ model \cite{Pe}, where in each round neighbors can exchange all available information without collisions. In this model, they showed a simple topology recognition algorithm working for a labeling scheme of length 1 in time $O(D)$. Thus there was no issue of optimality: both the length of the labeling scheme and the topology recognition time for such a scheme were trivially optimal.
Hence the authors focused on tradeoffs between the length of (longer) schemes and the time of topology recognition.
In our scenario of wireless networks, the labeling schemes must be longer and algorithms for such schemes must be slower, in order to overcome  collisions.

\subsection{Related work}

Algorithmic problems in radio networks modeled as graphs were studied for such tasks as broadcasting \cite{CGR,GPX}, gossiping \cite{CGR,GPPR} and leader election
\cite{KP}. In some cases \cite{CGR,GPPR} the topology of the network was unknown, in others \cite{GPX} nodes were assumed to have a labeled map of the network and could situate themselves in it.

Providing nodes of a network or mobile agents circulating in it with information of arbitrary type (in the form of binary strings) that can be used to perform network tasks more efficiently has been
proposed in \cite{AKM01,CFIKP,DP,EFKR,FGIP,FIP1,FIP2,FKL,FP,GPPR02,IKP,KKKP02,KKP05,SN}. This approach was referred to as
algorithms using {\em informative labeling schemes}, or equivalently, algorithms with {\em advice}.
When advice is given to nodes,  two variations are considered: either the binary string given to nodes is the same for all of them \cite{GMP} or different strings may be given to different nodes
\cite{FKL,FPP}, as in the case
of the present paper. If strings may be different, they can be considered as labels assigned to nodes.
Several authors studied the minimum size of advice (length of labels) required to solve the
respective network problem in an efficient way. The framework of advice or labeling schemes permits to quantify the amount of information
that nodes need for an efficient solution of a given network problem, regardless of the type of information that is provided.

In \cite{CFIKP} the authors investigated the minimum size of advice that has to be given to nodes
to permit graph exploration by a robot.
 In \cite{KKP05}, given a distributed representation of a solution for a problem,
the authors investigated the number of bits of communication needed to verify the legality of the represented solution.
In \cite{FIP1} the authors compared the minimum size of advice required to
solve two information dissemination problems, using a linear number of messages. In \cite{FIP2} the authors
established the size of advice needed to break competitive ratio 2 of an exploration algorithm in trees.
In \cite{FKL} it was shown that advice of constant size permits to carry on the distributed construction of a minimum
spanning tree in logarithmic time.
In \cite{GPPR} short labeling schemes were constructed with the aim to answer queries about the distance between any pair of nodes.
In \cite{EFKR} the advice paradigm was used for online problems.
In the case of \cite{SN} the issue was not efficiency but feasibility: it
was shown that $\Theta(n\log n)$ is the minimum size of advice
required to perform monotone connected graph clearing.
In \cite{IKP} the authors studied radio networks for
which it is possible to perform centralized broadcasting in constant time. They proved that
$O(n)$ bits of advice allow to obtain constant time in such networks, while
$o(n)$ bits are not enough. This is the only paper studying the size of advice in the context of radio networks.
In \cite{FPP} the authors studied the task of topology recognition in wired networks with port numbers.
The differences between this scenario and our setting of radio networks, in the context of topology recognition, was discussed in the previous section.

\section{Preliminaries and organization}

Throughout the paper, $D$ denotes the diameter of the tree and $\Delta$ denotes its maximum degree.
The problem of topology recognition is non-trivial only for $D, \Delta \geq 2$, hence we make this assumption from now on.

According to the definition of labeling schemes, a label of any node should be a finite binary string. For ease of comprehension, we present our labels in a more structured way, as either finite sequences of binary strings, or pairs of such sequences, where each of the component binary strings is later used in the topology recognition algorithm in a particular way. It is well known that a sequence $(s_1,\dots ,s_k)$ of binary strings or a pair $(\sigma_1,\sigma_2)$ of such sequences can be unambiguously coded as a single binary string whose length is a constant multiple of the sum of lengths of all binary strings $s_i$ that compose it. Hence, presenting labels in this more structured way and skipping the details of the encoding does not change the order of magnitude
of the length of the constructed labeling schemes.

Let $T$ be any rooted tree with root $r$,  and let $L(T)$  be a labeling scheme for this tree.
We say that a node $u$  in $T$ {\em reaches} $r$ within time $\tau$ using algorithm $\cA$ if there exists a simple path $u=u_0,u_1,\cdots,u_{k-1},u_k=r$ and a sequence of integers $t_0<t_1<\cdots< t_{k-1}\leq \tau$, such that in round $t_i$, the node $u_i$ is the only child of its parent $u_{i+1}$ that transmits and the node $u_{i+1}$ does not transmit in round $t_i$, according to algorithm $\cA$.

We define the history $H(\cA,\tau)$ of the root $r$ of the tree $T$ as the labeled subtree of $T$ which is spanned by all the nodes that
reach $r$ within time $\tau$, using algorithm $\cA$. The history $H(\cA,\tau)$ is the total information that node $r$ can learn about the tree $T$ in time $\tau$,
using algorithm $\cA$.

The remainder of the paper is organized as follows. In Section \ref{s1}, we present the lower bound
$\Omega(\log\log \Delta)$ on the length of labeling schemes that permit topology recognition for all trees with maximum degree $\Delta \geq 3$. Section \ref{s2}
is devoted to our main results concerning the time of topology recognition using labeling schemes of length $\Theta(\log\log \Delta)$ for trees of maximum degree
$\Delta \geq 3$ and diameter $D\geq 4$. We prove the lower bound
$\Omega (D\Delta^{\epsilon})$ on the time of such schemes, for any constant $\epsilon>0$, and we construct an algorithm using a labeling scheme of length $\Theta(\log\log \Delta)$ and working in time $O(D\Delta)$. In Section \ref{s3}, we give the solution to both our problems for the remaining small values of parameters $D$ or $\Delta$: when $\Delta \leq 2$
or $D\leq 3$. Section  \ref{s4} contains open problems.

\section{A lower bound on the length of labeling schemes}\label{sec:feas}\label{s1}

As mentioned in the Introduction, topology recognition without any labels cannot be performed in any tree because no information can be successfully transmitted
in an unlabeled radio network. Hence, the length of a labeling scheme permitting topology recognition must be a positive integer.
In this section we show a lower bound
$\Omega(\log\log \Delta)$ on the length of labeling schemes that permit topology recognition for all trees with maximum degree $\Delta \geq 3$.

It is enough to consider trees with maximum degree $\Delta \ge 2^{36}$. Let $S$ be a star with the central node $r$ of degree $\Delta$. Denote one of the leaves of $S$ by $a$.
For $\lfloor \frac{\Delta}{2}\rfloor\le i \le \Delta-1 $, we construct a tree $T_i$ by attaching $i$ leaves to $a$. The maximum degree of each tree $T_i$ is $\Delta$. Let $\cT$ be the set of trees $T_i$, for  $\lfloor \frac{\Delta}{2}\rfloor\le i \le \Delta-1 $, cf. Fig. \ref{fig:fig1}. Hence the size of $\cT$ is at least $\frac{\Delta}{2}$.

Let $R$ be the set of leaves attached to $r$ and let $A$ be the set of leaves attached to $a$. For a tree $T\in \cT$, consider a labeling scheme $L(T)$ for the nodes of $T$. Let $R'\subseteq R$ and $A' \subseteq A$ be the sets of nodes with unique labels in $R$ and $A$, respectively. We define the {\em view} $\cV$ of the root $r$ as the labeled subtree induced by the nodes $r$, $a$ and by the sets of nodes $R'$ and $A'$.

\begin{figure}[h]
\centering
\includegraphics[width=0.5\textwidth]{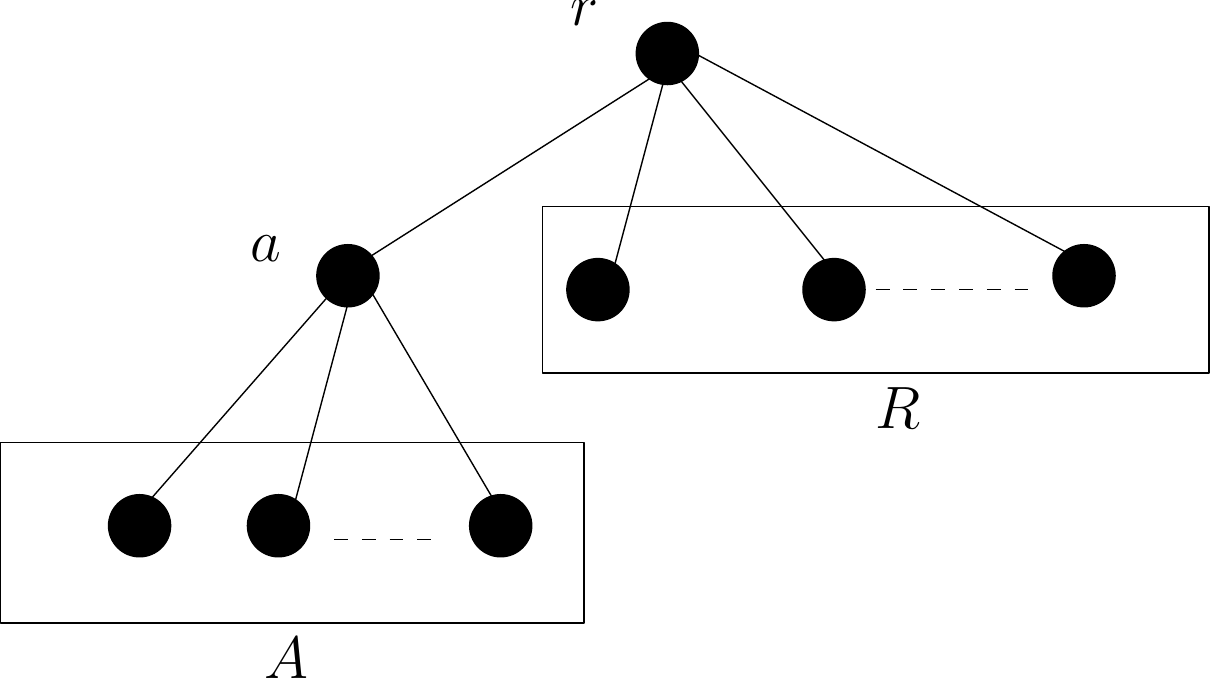}
\caption{Example of a tree in $\cal T$ }
\label{fig:fig1}
\end{figure}

We show that $\cV$ is the maximum information that the root $r$ can get at any time, when nodes of a tree $T\in \cT$ execute any deterministic algorithm.
Consider two nodes $v_1$ and $v_2$ in $R$ (respectively in $A$) with the same labels. Note that, the nodes in $R$ (respectively in $A$) can only hear from node $r$ (respectively from node $a$) and hence, the nodes $v_1$ and $v_2$ get the same information at any time. Since the labels of $v_1$ and $v_2$ are the same, therefore they must behave identically at any time, for any deterministic algorithm. Hence, the nodes $v_1$ and $v_2$ always broadcast in the same rounds,  and thus the node $r$ (respectively the node $a$) never receives any message from $v_1$ and $v_2$. The only nodes from which  $r$ (respectively $a$) can hear are the nodes in $R$ (respectively in $A$) with the unique labels.

The following result shows that any labeling scheme allowing topology recognition in trees of maximum degree $\Delta$ must have length $\Omega(\log \log \Delta)$.

\begin{theorem}\label{lem:feasibility}
For any tree $T \in \cT$ consider a labeling scheme LABEL($T$). Let {\it TOPO} be any topology recognition algorithm that solves topology recognition for every tree $T \in \cT$ using the scheme LABEL($T$).
Then there exists a tree $T' \in \cT$, for which the length of the scheme LABEL($T'$) is $\Omega(\log \log \Delta)$.
\end{theorem}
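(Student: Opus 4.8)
The plan is to argue by a counting (pigeonhole) argument on the information the root can extract. The key insight, already established in the discussion preceding the theorem, is that the only information node $r$ can ever acquire is its view $\cV$, which is the labeled subtree induced by $r$, $a$, and the uniquely-labeled leaves in $R'$ and $A'$. Since the algorithm {\it TOPO} must correctly recognize topology, every tree $T_i \in \cT$ must produce a \emph{distinct} view at the root (otherwise $r$ could not distinguish $T_i$ from $T_j$ for $i \neq j$, and would output the same map for both, contradicting correctness since $T_i$ and $T_j$ are non-isomorphic). Thus the number of distinct possible views must be at least $|\cT| \geq \Delta/2$.

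The core of the argument is then to bound the number of distinct views achievable with a labeling scheme of length $\ell$. First I would observe that a label of length at most $\ell$ is one of at most $2^{\ell+1}$ binary strings, so there are at most $m := 2^{\ell+1}$ distinct labels available. The view is determined (up to the fixed nodes $r$ and $a$) by the \emph{multiset} of labels appearing on the uniquely-labeled leaves attached to $r$ and to $a$ — but in fact, since only \emph{uniquely} labeled leaves contribute, each distinct label can appear at most once in $R'$ and at most once in $A'$. Hence the view is determined by choosing a subset of labels present uniquely in $R$ (at most $2^m$ choices) and a subset present uniquely in $A$ (at most $2^m$ choices), giving at most $2^{2m} = 2^{2^{\ell+2}}$ distinct views.

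Combining the two bounds, I would write $\Delta/2 \leq 2^{2^{\ell+2}}$, and solve for $\ell$. Taking logarithms twice yields $\log\log(\Delta/2) \leq \ell + 2 + O(1)$, which gives $\ell = \Omega(\log\log\Delta)$. Since the scheme assigning labels to the particular tree $T'$ realizing the largest view must carry enough distinct labels to be distinguishable, the length of LABEL$(T')$ for some $T' \in \cT$ is $\Omega(\log\log\Delta)$. The assumption $\Delta \geq 2^{36}$ ensures the asymptotics are clean and the constants work out.

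The main obstacle, and the step requiring the most care, will be the counting of distinct views: I must justify precisely why a view corresponds to a pair of \emph{sets} (rather than sequences or multisets) of labels, which rests on the observation that duplicated labels are invisible to $r$ and so only the \emph{set} of uniquely-occurring labels matters. A subtle point is that a label could occur uniquely in $R$ but also appear in $A$, or vice versa; I would need to confirm that the view of $r$ distinguishes the two sides (it does, because $r$'s neighbors and $a$'s neighbors sit at different distances/positions in the induced subtree), so the bound $2^m \cdot 2^m$ on the number of $(R',A')$ label-set pairs is valid. Once this combinatorial bound is pinned down, the double-logarithm falls out immediately from the pigeonhole comparison with $|\cT| \geq \Delta/2$.
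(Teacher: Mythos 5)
Your proposal is correct and follows essentially the same route as the paper's own proof: both arguments reduce correctness to the root's view $\cV$ being distinct across the trees of $\cT$, bound the number of possible views by counting pairs of subsets of the available label set (at most $2^{2m}$ with $m$ the number of labels), and conclude by pigeonhole against $|\cT| \ge \Delta/2$. The only difference is bookkeeping: the paper's count also multiplies in the factor for the labels of $r$ and $a$ themselves, which you elide; restoring that factor (singly exponential in the label length, hence dominated by the doubly-exponential subset count) leaves the $\Omega(\log\log\Delta)$ conclusion unchanged.
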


\begin{proof}
We prove this theorem by contradiction. Suppose that there exists an algorithm {\it TOPO} that solves topology recognition for every tree in $\cT$ with labels of length at most $\frac{1}{2}\log \log \Delta$. There are at most $2\sqrt{\log \Delta}$ possible different labels of this length.
There are at most $2^{2\sqrt{\log \Delta}}$ different possible subsets $R' \subset R$ with unique labels and at most  $2^{2\sqrt{\log \Delta}}$ different possible subsets $A' \subset A$  with unique labels, when the
length of the labels is at most $\frac{1}{2}\log \log \Delta$. Since each of the nodes $r$ and $a$ can also have one of the at most $2\sqrt{\log \Delta}$ possible labels, the total number of different views $\cV$ of $r$ is at most $(2 \sqrt{\log \Delta} ~2^{2\sqrt{\log \Delta}})^2 < \frac{\Delta}{2} \leq |\cT|$.

Therefore, by the Pigeonhole principle, there exist two trees $T'$, $T''$ in $\cT$ such that the view of $r$ in $T'$ is the same as the view of $r$ in $T''$. This implies that the node $r$ in $T'$ and the node $r$ in $T''$ must behave identically in every round, hence they must  output the same tree. This contradicts the fact the trees $T'$ and $T''$ are non-isomorphic.
This completes the proof.
\end{proof}

\section{Time for maximum degree $\Delta \geq 3$ and diameter $D \geq 4$  }\label{s2}

%
%

In this section, we present our main results concerning the time of topology recognition, using the shortest possible  labeling schemes (those of length $\Theta(\log\log \Delta)$) for trees of maximum degree
$\Delta \geq 3$ and diameter $D\geq 4$. We propose an algorithm using a labeling scheme of length $\Theta(\log\log \Delta)$ and working in time $O(D\Delta)$, and prove an almost matching lower bound
$\Omega (D\Delta^{\epsilon})$ on the time of such schemes, for any constant $\epsilon <1$.

\subsection{The main algorithm}

Let $T$ be a rooted tree of diameter $D$ and maximum degree $\Delta$.
It has either a central node or a central edge, depending on whether $D$ is even or odd.
 If $D$ is even, then the central node is the unique node in the middle of every simple
path of length $D$, and if $D$ is odd, then the central edge is the unique edge in
the middle of every simple path of length $D$. For the sake of description, we choose the central node or one of the endpoints of the central edge as the root $r$ of $T$.
Let $h=\lceil D/2 \rceil$ be the height of this tree. The {\em level} of any node $v$ is its distance from the root. For any node $v$ we denote by $T_v$ the subtree of $T$
rooted at $v$.

We propose an algorithm that solves topology recognition in time $O(D\Delta )$, using a labeling scheme of length $O(\log \log \Delta)$.
The structure of the tree will be transmitted bottom up, so that the root learns the topology of the tree, and then transmits it to all other nodes.
The main difficulty is to let every node know the round number $\rho$ in which it has to transmit, so that it is the only node among its siblings that transmits in round $\rho$,
and consequently its parent gets the message. Due to very short labels, $\rho$ cannot be explicitly given to the node as a part of its label. We overcome this difficulty by carefully coding $\rho$ for a node $v$, using the labels given to the nodes of the subtree rooted at $v$, so that $v$ can unambiguously decode $\rho$.

A node $v$ in $T$ is called {\em heavy}, if  $|V(T_v)|\ge \frac{1}{4} (\lfloor\log \Delta\rfloor+1)$. Otherwise, the node is called {\em light}. Note that the root is a heavy node. For a heavy node $v$, choose a subtree $T'_v$ of $T_v$ rooted at $v$, of size $\lceil\frac{1}{4} (\lfloor\log \Delta\rfloor+1)\rceil$.

First, we define the labeling scheme $\Lambda$. The label $\Lambda(v)$ of each node $v$ contains two parts. The first part is a vector of markers that are binary strings of constant length, used to identify nodes with different properties. The second part is a vector of 5 binary strings of length $O(\log \log \Delta)$ that are used to determine the time when the node should transmit.

Below we describe how the markers are assigned to different nodes of $T$.
\begin{enumerate}
\item Mark the root $r$ by the marker 0, and mark one of the leaves at maximum depth by the marker~1.
\item Mark all the nodes in $T'_r$ by the marker $2$.
\item Mark every heavy node by the marker $3$, and mark every light node by the marker $4$.
\item For every heavy node $v$ all of whose children are light, mark all the nodes of $T'_v$ by the marker~5.
\item For every light node $v$ whose parent is heavy, mark all the nodes in $T_v$ by the marker 6.
\end{enumerate}

The first part of every label is a binary string  $M$ of length 7, where the markers are stored. Note that a node can be marked by multiple markers. If the node is marked by the marker $i$, for $i=0,\dots ,6$, we have $M(i)=1$; otherwise, $M(i)=0$.

In order to describe the second part of each label, we define an integer $t_v$ for every heavy node $v \neq r$, and an integer $z_v$, for every light node $v$ whose parent is heavy.
 We define $t_v$, for a heavy node $v$ at level $l>0$,  to identify the time slot in which $v$ will transmit according to the algorithm.
 The definition is by induction on $l$. For $l=1$,
 let $v_1$, $v_2$, $\dots$, $v_x$, be the heavy children of $r$. Set $t_{v_i}=i$.
 Suppose that $t_v$ is defined for every heavy node $v$ at level $l$.
 Let $v$ be a heavy node at level $l$. Let $u_1$, $u_2$, $\dots$, $u_y$ be the heavy children of $v$.
We set $t_{u_1}=t_v$, and we define $t_{u_j}$, for  $2\le j\le y$, as distinct integers from the range $\{1,\dots, y\}\setminus \{t_v\}$.
This completes the definition of $t_v$, for all heavy nodes $v\neq r$.

We now define $z_v$, for a light node $v$ whose parent is heavy, to identify the time slot in which $v$ will transmit according to the algorithm.
Let $S_i$ be a maximal sequence of non-isomorphic rooted trees of $i$ nodes.  There are at most $2^{2(i-1)}$ such trees. Let $\cal S$ be the sequence which is the concatenation of $S_1$, $S_2$, $\dots$, $S_{\lceil\frac{1}{4} (\lfloor\log \Delta\rfloor+1)\rceil-1}$. Let $q$ be the length of $\cal S$. Then $q\le 2^{2 ( \frac{1}{4} (\lfloor\log \Delta\rfloor+1))}\le \sqrt{2\Delta}$. Note that the position of any tree of $i$ nodes in $\cal S$ is at most $2^{2i-1}$.
 Let ${\cal S}=(T_1,T_2,\dots,T_q)$. For a light node $v$ whose parent is heavy, we define $z_v=k$, if $T_v$ and $T_k$ are isomorphic.

The second part of each label is a vector $L$ of length 5, whose terms $L(i)$ are binary strings of length $O(\log \log \Delta)$. Initialize all terms $L(i)$ for every node $v$ to 0. We now describe how some of these terms are changed for some nodes.
They are defined as follows.

\begin{enumerate}
\item All the nodes which get $M(2)=1$ are the nodes of $T'_r$. There are exactly $\lceil \frac{1}{4} (\lfloor\log \Delta\rfloor+1)\rceil$ nodes in $T'_r$.
      All nodes in $T'_r$ are assigned distinct ids which are binary representations of the integers 1 to $\lceil \frac{1}{4} (\lfloor\log \Delta\rfloor+1)\rceil$. Let $s$ be the string of length $(\lfloor \log \Delta \rfloor +1)$ which is the binary representation of the integer $\Delta$. Let $b_1$, $b_2$, $\cdots$, $b_{\lceil \frac{1}{4} (\lfloor\log \Delta\rfloor+1)\rceil}$ be the substrings of $s$, each of length at most 4, such that $s$ is the concatenation of the substrings $b_1$, $b_2$, $\cdots$, $b_{\lceil \frac{1}{4} (\lfloor\log \Delta\rfloor+1)\rceil}$. The term $L(0)$
      corresponding to a node whose id is $i$, is set to the pair $(B(i),b_i)$, where $B(i)$ is the binary representation of the integer $i$.
      The intuitive role of the term $L(0)$ is to code the integer $\Delta$ in the nodes of the tree $T'_r$.

\item Let $v$ be a node with $M(3)=1$, and $M(5)=1$, i.e, let $v$ be a heavy node whose all children are light. All nodes in $T'_v$ are assigned distinct ids which are binary representations of integers 1 to $\lceil \frac{1}{4} (\lfloor\log \Delta\rfloor+1)\rceil$. Let $s$ be the string of length $(\lfloor \log \Delta \rfloor +1)$ which is the binary representation of the integer $t_v$. Let $b_1$, $b_2$, $\cdots$, $b_{\lceil\frac{1}{4}(\lfloor \log \Delta \rfloor +1)\rceil}$ be the substrings of $s$, each of length at most 4, such that $s$ is the concatenation of the substrings $b_1$, $b_2$, $\cdots$, $b_{\lceil\frac{1}{4}(\lfloor \log \Delta \rfloor +1)\rceil}$.
    The term $L(1)$ corresponding to a node whose id is $i$, is set to the pair $(B(i),b_i)$, where $B(i)$ is the binary representation of the integer $i$.
     The intuitive role of the term $L(1)$ is to code the integer $t_v$, for a heavy node $v$ whose all children are light, in the nodes of the tree $T'_v$.

    \item
    Let $v$ be a node with $M(3)=1$, i.e., a heavy node. Let $u$ be the parent of $v$. If  $t_u=t_v$, set $L(2)=1$ for the node $v$.
    The intuitive role of the term $L(2)$ at a heavy node $v$ is to tell its parent $u$ what is the value of $t_u$.

\item Let $v$ be a node with $M(4)=1$ and $M(6)=1$, i.e, let $v$ be a light node whose parent is heavy. All nodes in $T_v$ are assigned distinct ids which are binary representations of the integers 1 to $p$, where $p$ is the size of $T_v$.
    Let $s$ be the string of length at most $2p$ which is the binary representation of the integer $z_v$. Let $b_1$, $b_2$ $\cdots$, $b_{p}$ be the substrings of $s$, each of length at most 2, such that $s$ is the concatenation of the substrings $b_1$, $b_2$ $\cdots$, $b_{p}$.
    The term $L(3)$ of the node whose id is $i$ is set to the pair $(B(i),b_i)$, where $B(i)$ is the binary representation of the integer~$i$.
         The intuitive role of the term $L(3)$ is to code the integer $z_v$, for a light node $v$ whose parent is heavy, in the nodes of the tree $T_v$.

\item Let $v$ be a node with $M(3)=1$, i.e., a heavy node. Partition all light children $u$ of $v$ into sets with the same value of $z_u$. Consider any set
$\{u_1,u_2,\dots,u_a\}$  in this partition. Let $s$ be the binary representation of the integer $a$  and let $b_1$, $b_2$, $\cdots$, $b_{\lceil\frac{1}{4}(\lfloor \log a \rfloor +1)\rceil}$ be the substrings of $s$, each of length at most 4, such that $s$ is the concatenation of the substrings $b_1$, $b_2$, $\cdots$, $b_{\lceil\frac{1}{4}(\lfloor \log a \rfloor +1)\rceil}$.

For node $u_i$, where  $i \leq \lceil\frac{1}{4}(\lfloor \log a \rfloor +1)\rceil$,
the term $L(4)$ is set to the pair $(B(i),b_i)$, where $B(i)$ is the binary representation of the integer $i$, for $1\le i\le \lfloor \log a \rfloor +1$, and $b_i$ is the $i$th bit
of the binary representation of $a$. The intuitive role of the term $L(4)$ is to force two light children $v_1$ and $v_2$ of the same heavy parent, such that $z_{v_1}=z_{v_2}$, to transmit in different rounds.

\item
For any node $v$ the term $L(5)$ is set to the binary representation of  the integer $\lceil \frac{1}{4} (\lfloor\log \Delta\rfloor+1)\rceil$. This term will be used in a gossiping algorithm that is used as a subroutine in our algorithm.

\end{enumerate}

Notice that  the length of each $L(j)$ defined above is of length $O(\log \log \Delta)$ for every node, and there is no ambiguity in setting these terms, as every term for a node is modified at most once. This completes the description of our labeling scheme whose length is  $O(\log \log \Delta)$.

The algorithm consists of four procedures, namely Procedure {\tt Parameter Learning}, Procedure {\tt Slot Learning}, Procedure {\tt T-R} and Procedure {\tt Final}.
 In the first two procedures we will use the simple gossiping algorithm {\tt Round-Robin} which enables nodes of any  graph of size at most $m$ with distinct ids from the set  $\{1,\dots, m\}$ to gossip in time $m^2$,
 assuming that they know $m$ and that each node with id $i$ has an initial message $\mu_i$. The time segment $1,\dots , m^2$ is partitioned into $m$ segments of length $m$, and the node with id $i$ transmits in the $i$th round of each segment. In the first time segment, each node with id $i$ transmits the message $(i,\mu_i)$. 
 In the remaining $m-1$ time segments, nodes transmit all the previously acquired information.
 Thus at the end of  algorithm {\tt Round-Robin}, all nodes know the entire topology of the network, with nodes labeled by pairs $(i,\mu_i)$.

\noindent
 {\bf Procedure}  {\tt Parameter Learning}\\
  The aim of this procedure is for every node of the tree to learn the maximum degree $\Delta$, the level of the tree to which the node belongs, and the height $h$ of the tree.

 The procedure consists of two stages. The first stage is executed in rounds $1,\dots ,m^2$, where  $m=\lceil\frac{1}{4}(\lfloor \log \Delta \rfloor +1)\rceil$, and consists of performing algorithm {\tt Round-Robin} by the nodes with $M(2)=1$, i.e., the nodes in $T'_r$. Each such node uses its id $i$ written in the first component of the term $L(0)$, uses its label as $\mu_i$, and  takes $m$ as the integer whose representation is given in the term $L(5)$, and uses .

 After this stage, the node with $M(0)=1$, i.e., the root $r$, learns all pairs $(B(1),b_1)$, ..., $(B(m), b_m)$, where $B(i)$ is the binary representation of the integer $i$, corresponding to the term $L(0)$
 at the respective nodes. It computes the concatenation $s$ of the strings $b_1$, $b_2$, $\dots$, $b_{m}$. This is the binary representation of $\Delta$.

 The second stage of the procedure starts in round $m^2+1$. In round $m^2+1$, the root $r$ transmits the message $\mu$ that contains the value of $\Delta$. A node $v$, which receives the message $\mu$ at time $m^2+i$ for the first time, sets its level as $i$ and transmits $\mu$. When the node $u$ with $M(1)=1$, i.e., a deepest leaf, receives $\mu$ in round $m^2+j$, it sets its level as $h=j$, learns that the height of the tree is $h$, and transmits the pair $(h,h)$ in the next round. Every node at level $l$, after receiving the message $(h,l+1)$ (from a node of level $l+1$) learns $h$ and  transmits the pair $(h,l)$. After receiving the message $(h,1)$, the root $r$ transmits the message $\mu'$ that contains the value $h$.
  Every node learns $h$ after receiving it for the first time and retransmits $\mu'$, if its level is less than $h$. The stage, and hence the entire procedure, ends in round $m^2+3h$.

\noindent
 {\bf Procedure} {\tt Slot Learning}\\
 The aim of this procedure is for every heavy node all of whose children are light, and for every light node whose parent is heavy, to learn the time slot in which it should transmit. Moreover, at the end of the procedure, every light node $v$ learns $T_v$.

 Let $t_0=m^2+3h$, where  $m=\lceil\frac{1}{4}(\lfloor \log \Delta \rfloor +1)\rceil$. The total number of rounds reserved for this procedure is $2m^2$. The procedure starts in round $t_0+1$ and ends in round $t_0+2m^2$. The procedure consists of two stages. The first stage is executed in rounds $t_0+1,\dots, t_0+m^2$, and consists of performing algorithm {\tt Round-Robin} by the nodes with $L(1)\ne 0$, i.e., the nodes in $T'_v$, for a heavy node $v$ all of whose children are light.
 Each such node uses its id $i$ written in the first component of the term $L(1)$, uses its label as $\mu_i$, and
takes $m$ as the integer whose representation is given in the term $L(5)$.
 After this stage, each node $v$ with $M(3)=1$ and $M(5)=1$, i.e., a heavy node all of whose children are light, learns all pairs $(B(1),b_1),\dots ,(B(m), b_m)$, where $B(i)$ is the binary representation of the integer $i$, corresponding to the term $L(1)$
 at the respective nodes. It computes the concatenation $s$ of the strings $b_1$, $b_2$, $\dots$, $b_{m}$. This is the binary representation of the integer $t_v$, which will be used to compute the time slot in which node $v$ will transmit in the next procedure.

The second stage is executed in rounds $t_0+m^2+1,\dots, t_0+2m^2$, and consists of performing algorithm {\tt Round-Robin} by the nodes with $L(2)\ne 0$, i.e., the nodes in $T_v$, for a light node $v$ whose parent is heavy.
Each such node uses its id $i$ written in the first component of the term $L(3)$, uses its label as $\mu_i$, and
takes $m$ as the integer whose representation is given in the term $L(5)$.
After this stage, each node $v$ with $M(4)=1$ and $M(6)=1$, i.e., a light node whose parent is heavy, learns all pairs $(B(1),b_1),\dots ,(B(k), b_k)$, where $k< m$ and  $B(i)$ is the binary representation of the integer $i$, corresponding to the term $L(3)$
 at the respective nodes. Node $v$ computes the concatenation $s$ of the strings $b_1$, $b_2$, $\dots$, $b_{k}$. This is the binary representation of the integer $z_v$,  which will be used to compute the time slot in which node $v$ will transmit in the next procedure.
 Moreover, each node $w$ in $T_v$ learns $T_w$ because it knows the entire tree $T_v$ with all id's.
 The stage, and hence the entire procedure, ends in round $t_1=t_0+2m^2$.

 \noindent
 {\bf Procedure} {\tt T-R}\\
 The aim of this procedure is learning the topology of the tree by the root.

 All heavy nodes and all light nodes whose parent is heavy transmit in this procedure.
 The procedure is executed in $h$ epochs. The number of rounds reserved for an epoch is $2\Delta$. The first $\Delta$ rounds of an epoch are reserved for transmissions
 of heavy nodes and the last $\Delta$ rounds of an epoch are reserved for transmissions of light nodes whose parent is heavy.  The epoch $j$ starts in round $t_1+2(j-1)\Delta+1$ and ends in round $t_1+2j\Delta$. All the nodes at level $h-i+1$ which are either heavy nodes or  light nodes with a heavy parent transmit in the epoch $i$.
 When a node $v$ transmits in some epoch, it transmits a message $(\Lambda(v),T_v,C)$, where  $C=t_v$, if $v$ is a heavy node, and $C=0$, if it is a light node. Below we describe the steps that a node performs in the execution of the procedure, depending on its label.

 Let $v$ be a node with $M(4)=1$ and $M(6)=1$, i.e., $v$ is a light node whose parent is heavy. The node $v$ transmit in this procedure if $L(4) \ne 0$.
  Let the level of $v$ (learned in the execution of Procedure  {\tt Parameter Learning}) be $l$. Let the first component of the term $L(4)$ be the binary representation of the integer $c>0$. The node $v$ already knows the value $z_v$ which it learned in the execution of Procedure  {\tt Slot Learning}. Knowing $\Delta$, node $v$ computes the list ${\cal S}=(T_1,T_2,\dots,T_q)$ of trees (defined above) which unambiguously depends on $\Delta$.
The node $v$ transmits the message $(\Lambda(v),T_{z_v}, 0)$ in round $t_1+2(h-l)\Delta+\Delta+(z_v-1)\lceil\frac{1}{4}(\lfloor \log \Delta \rfloor+1)\rceil+c$. We will show that  node $v$ is the only node among its siblings that transmits in this round.

 Let $v$ be a node with $M(3)=1$ and $M(5)=1$, i.e., $v$ is a heavy node all of whose children are light. Let $l$ be the level of $v$. All the children of $v$ are light nodes with a heavy parent. They are at level $l-1$. Let $u_1$, $u_2$, $\dots$, $u_k$ be those children from which $v$ received messages in the previous epoch. First, the node $v$ partitions the nodes $u_1$, $u_2$, $\dots$, $u_{k}$ into disjoint sets $R_1,R_2,\cdots, R_e$ such that all nodes in the same set have sent the message with same tree $Q$.
 For each such set $R_d$, $1\le d\le e$,
 let $Q_d$ be the tree sent by nodes from $R_d$. The node $v$
 got all pairs $(B(1),b_1),\dots ,(B(x), b_x)$, where $x=|R_d|< m$ and  $B(i)$ is the binary representation of the integer $i$, corresponding to the term $L(4)$ at its children in $R_d$. Node $v$ computes the concatenation $s$ of the strings $b_1$, $b_2$, $\dots$, $b_{k}$. Let $y_d$ be the integer whose binary representation is $s$. After computing all $y_d$'s, for $1\le d\le e$, $v$ computes the tree $T_v$, by attaching $y_d$ copies of the the tree $Q_d$ to $v$ for  $d=1, \dots ,e$.  The node $v$ transmits the message $(\Lambda(v), T_v,t_v )$ in round $t_1+2(h-l)\Delta+t_v$. We will show that node $v$ is the only node among its siblings that transmits in this round.

Let $v$ be a node with $M(3)=1$ and $M(5)=0$, i.e., $v$ is a heavy node who has at least one heavy child. Let $u_1,\dots, u_{k_1}$ be the light children of $v$ from which $v$ received a  message in the previous epoch, and let $u'_1,\dots ,u'_{k_2}$ be the heavy children of $v$ from which $v$ received a message in the previous epoch. The node $v$ computes the tree $T_v$ rooted at $v$ as follows. It first attaches trees rooted at its light children,
using the messages it received from them, in the same way as explained in the previous case. Then, it attaches trees rooted at its heavy children.
These trees are computed from the code
$\beta$ in the message from  each of the heavy children of $v$.  Let $u'$ be the unique heavy child of $v$ for which the term $L(5)=1$. The node $v$ computes $t_v$ which is equal to the term $C$ in the message it received from the node $u'$.
The node $v$ transmits the message $(\Lambda(v),T_v,t_v )$ in round $t_1+2(h-l)\Delta+t_v$. We will show that node $v$ is the only node among its siblings that transmits in this round.

 \noindent
 {\bf Procedure} {\tt Final}\\
The aim of this procedure is for every node of the tree to learn the topology of the tree and to place itself in the tree.
The procedure starts in round $t_1+2h\Delta+1$ and ends in round $t_1+2h\Delta+h$. In round $t_1+2h\Delta+1$, the root $r$ transmits the message that contains the tree $T_r$. In general, every node $v$ transmits a message exactly once in Procedure {\tt Final}. This message contains the sequence
$(T_r, T_{w_p}, \dots ,T_{w_1},T_v)$, where $w_i$ is the ancestor of $v$ at distance $i$.
In view of the fact that every node $v$ already knows $T_v$ at this point,
after receiving a message containing the sequence $(T_r, T_{w_p}, \dots ,T_{w_1})$ in round $j$, a node $v$ transmits the sequence $(T_r, T_{w_p}, \dots ,T_{w_1},T_v)$ in round $j+1$,  if its level is less than $h$.

  A node $v$ outputs the tree $T_r$, and identifies itself as one of the nodes in $T_r$ for which the subtrees rooted at their ancestors in each level starting from the root are isomorphic to the trees
 in the sequence $(T_r, T_{w_p}, \dots ,T_{w_1},T_v)$. (Notice that there may be many such nodes). The procedure ends in round $t_1+2h\Delta+h$, when all nodes place themselves in $T_r$ and output $T_r$.

 Our algorithm can be succinctly formulated as follows.

 \begin{algorithm}
\caption{{\tt  Tree Topology Recognition}}
\begin{algorithmic}[1]
\STATE{\tt Parameter Learning}
 \STATE{\tt Slot Learning}
 \STATE{\tt T-R}
 \STATE{\tt Final}
\end{algorithmic}
\end{algorithm}

%
%

 We now prove the correctness of the algorithm and its time complexity. We will use the following lemmas.

\begin{lemma}
After the execution of Procedure  {\tt Parameter Learning}, every node learns $\Delta$, learns its level in the tree, and the height $h$ of the tree.
\end{lemma}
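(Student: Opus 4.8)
The plan is to split the statement into its three assertions---that every node learns $\Delta$, its own level, and the height $h$---and to verify them in the order the procedure produces them, checking collision-freeness of each transmission wave against the radio model. I would first argue that the root learns $\Delta$ during the first stage. The nodes with $M(2)=1$ are exactly those of the connected subtree $T'_r$; they carry distinct ids $1,\dots,m$ in the first component of $L(0)$, where $m=\lceil\frac14(\lfloor\log\Delta\rfloor+1)\rceil$ is read from $L(5)$, and during rounds $1,\dots,m^2$ they alone transmit. Because {\tt Round-Robin} lets only the single id matching the current position within a segment speak, at most one node of $T'_r$ transmits in any round, so every listening node has at most one transmitting neighbour and no collision occurs; by the stated correctness of {\tt Round-Robin}, all nodes of $T'_r$, and in particular $r$, learn the labelled topology of $T'_r$. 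The root, now knowing every label in $T'_r$, concatenates the second components $b_1,\dots,b_m$ of $L(0)$ in id order, which by construction is the binary representation of $\Delta$.

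Next I would analyse the downward wave of $\mu$ starting in round $m^2+1$, proving by induction on the level $\ell$ that every node at level $\ell$ first hears $\mu$ in round $m^2+\ell$, thereby learning $\Delta$ and setting its level to $\ell$, and retransmits $\mu$ one round later. The base case is the transmission of $r$ in round $m^2+1$. For the step, a node $v$ at level $\ell+1$ has only its parent (level $\ell$, transmitting in round $m^2+\ell+1$ by the hypothesis) and its children (level $\ell+2$, still silent, as they have not yet received $\mu$) as neighbours; hence the parent is its unique transmitting neighbour and $v$ hears $\mu$ cleanly. This settles the first two assertions for all nodes.

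For the height, the designated deepest leaf $u$ (with $M(1)=1$) lies at level $h$, so by the previous paragraph it receives $\mu$ in round $m^2+h$, records $h$, and starts the upward wave by sending $(h,h)$. I would show by a downward induction on the level that the height pair travels along the unique path from $u$ to $r$: the sole path node at level $\ell$ is the only child of its parent that has received $(h,\ell+1)$, hence the only neighbour of the parent transmitting a height pair, so the parent hears it without collision. Once $r$ holds $h$ it launches the downward wave $\mu'$, whose collision-freeness is identical to that of the $\mu$ wave and which delivers $h$ to every node of level below $h$, hence to all nodes. Summing the three segments gives termination in round $m^2+3h$.

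The step I expect to be the main obstacle is the collision bookkeeping at the seams between the waves, and especially the first upward hop: in round $m^2+h+1$ the leaf $u$ emits $(h,h)$ while, under the generic rule, its level-$h$ siblings would still be retransmitting $\mu$, which would wipe out the parent's reception. I would resolve this by noting that a retransmission of $\mu$ by a node at the maximum level $h$ informs no one (such a node has no children), so these nodes may be taken to fall silent once they have set their level; then $u$ is the only child of its parent to transmit in that round, and for every later hop all $\mu$-retransmissions are already finished, so the upward induction stays clean.
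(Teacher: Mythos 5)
Your proof follows the same three-wave decomposition as the paper's own (much terser) argument: {\tt Round-Robin} on $T'_r$ gives $\Delta$ to the root, the downward $\mu$-wave gives every node its level, the upward wave along the path from the marked leaf delivers $h$ to the root, and the final wave $\mu'$ distributes $h$. Your collision bookkeeping for these waves is correct and is in fact more rigorous than the paper's proof, which simply narrates the procedure without ever verifying that receptions are collision-free. You also correctly isolated the one genuinely delicate point, which the paper passes over in silence: in round $m^2+h+1$ the marked leaf $u$ transmits $(h,h)$ while, under the stated rule, every other node at level $h$ retransmits $\mu$; so if $u$ has siblings, its parent hears nothing and the upward wave never starts.

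However, your patch for this seam does not quite work inside the model. You propose that nodes at maximum level fall silent ``once they have set their level,'' on the grounds that a leaf's retransmission informs no one. But when a node at level $h$ first receives $\mu$, it learns only its own level $j$, not the height $h$, so it cannot tell that its level is maximal; nor can it tell that it is a leaf, since in this radio model a node has no knowledge of its degree and the labeling scheme $\Lambda$ contains no marker identifying leaves. A repair that stays within the model is to let $u$ send $(h,h)$ one round later, in round $m^2+h+2$: all $\mu$-retransmissions (which are triggered only by \emph{first} reception) have ceased by then, so $u$ is the unique transmitting child of its parent, and the rest of your upward induction goes through verbatim at the cost of one extra round. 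Alternatively, a constant-length marker for leaves could be added to the scheme, making your original fix implementable. With either repair your argument is complete.
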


\begin{proof}
After round $m^2$, the root learns all pairs $(B(1),b_1)$, ..., $(B(m), b_m)$, where $B(i)$ is the binary representation of the integer $i$, corresponding to the term $L(0)$ at the respective nodes of the tree $T'_r$.
According to the assignment of the labels to the nodes, the binary string $s$, which is the concatenation of $b_1$, $b_2$, $\dots,$ $b_m$ is the binary representation of the integer $\Delta$. Therefore, in round $m^2$, the root $r$ learns $\Delta$.

In the second stage of Procedure  {\tt Parameter Learning}, $r$ transmits the message $\mu$ containing the value of $\Delta$ in round $m^2+1$. A node which is at distance $i$ from $r$ receives $\mu$ in round $m^2+i+1$ for the first time and learns its level $i$.
According to the labeling scheme $\Lambda$, one of the deepest leaves is marked by the marker 1, i.e, $M(1)=1$ for this node. When this node receives $\mu$, it learns its level $h$ and transmits a message that contains the value of $h$. When this message reaches $r$, $r$ learns $h$ and then $r$ transmits again the value of $h$. After receiving this message, every node learns the height $h$ of the tree.
\end{proof}
\begin{lemma}\label{lem:slot}
In the execution of Procedure  {\tt T-R}, if $v$ is a node that transmits in some round, then no other sibling of $v$ transmits in this round.
\end{lemma}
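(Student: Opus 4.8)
The plan is to fix a transmitting node $v$ with parent $u$ and show that every sibling of $v$ that also transmits does so in a different round. The first step is to pin down which rounds are available to the siblings at all. Since $v$ transmits in Procedure {\tt T-R}, it is either heavy or a light node with a heavy parent; in both cases its parent $u$ is heavy, because a heavy node always has a heavy parent (the subtree rooted at the parent contains the subtree rooted at the child, so $|V(T_u)|\ge |V(T_v)|$). Hence every sibling of $v$ is a child of the heavy node $u$, and all siblings lie at a common level $l$. By the epoch assignment, each of them transmits, if at all, during the single epoch $i=h-l+1$, whose rounds are $t_1+2(h-l)\Delta+1,\ldots,t_1+2(h-l)\Delta+2\Delta$. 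Writing $m=\lceil\frac{1}{4}(\lfloor\log\Delta\rfloor+1)\rceil$, I would then split the siblings into heavy ones and transmitting light ones and treat the three pairwise cases.

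The cross case (a heavy sibling versus a light sibling) follows from the split of each epoch into two halves. A heavy sibling $w$ transmits at offset $t_w$, and a short induction on the level shows $1\le t_w\le\Delta$: at level $1$ the heavy children of $r$ number at most $\Delta$, and the inductive rule assigns to the heavy children $x_1,\ldots,x_y$ of a heavy node $u$ the values $t_{x_1}=t_u\le\Delta$ and $t_{x_j}\in\{1,\ldots,y\}$ for $j\ge2$, with $y\le\Delta$. Thus a heavy sibling lands in the first $\Delta$ rounds of the epoch, while a transmitting light sibling lands at offset $\Delta+(z_w-1)m+c>\Delta$, i.e. in the last $\Delta$ rounds, so the two never collide. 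For two heavy siblings I would invoke the distinctness built into the $t$-values: among the heavy children $x_1,\ldots,x_y$ of $u$ one has $t_{x_1}=t_u$ while $t_{x_2},\ldots,t_{x_y}$ are chosen as distinct elements of $\{1,\ldots,y\}\setminus\{t_u\}$, so all $y$ labels differ and the two heavy siblings transmit at distinct offsets.

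The main case, and the place where the bookkeeping must be done carefully, is two transmitting light siblings $w_1,w_2$, with offsets $\Delta+(z_{w_j}-1)m+c_j$. I would first check that the index $c_j$ read from the first component of $L(4)$ always lies in $\{1,\ldots,m\}$: it equals the rank $i$ of the node inside its $z$-partition class of $u$, and only the nodes with $i\le\lceil\frac14(\lfloor\log a\rfloor+1)\rceil\le m$ (those with $L(4)\neq0$) transmit, where $a\le\Delta$ is the size of the class. If $z_{w_1}=z_{w_2}$, then $w_1,w_2$ belong to the same class and receive distinct ranks, so $c_1\neq c_2$ and the offsets differ. If $z_{w_1}\neq z_{w_2}$, say $z_{w_1}<z_{w_2}$, the offsets fall into disjoint blocks of length $m$: the $z_{w_1}$-block occupies offsets $(z_{w_1}-1)m+1,\ldots,z_{w_1}m$, whereas the $z_{w_2}$-block starts at $(z_{w_2}-1)m+1\ge z_{w_1}m+1$, so the two offsets are different. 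Combining the three cases gives the claim.

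I expect the light-light case to be the main obstacle, precisely because it is the only one resting on a two-level encoding: a coarse separation by $z_v$ into blocks of length $m$, and a fine separation by the rank $c$ inside a block. Making it airtight requires both that $c\le m$, so that consecutive blocks touch but never overlap, and that each $z$-class hands out distinct ranks to its transmitting members. By contrast, the heavy-heavy and cross cases are comparatively mechanical once the pairwise distinctness of the $t$-values and the heavy/light split of the epoch are established.
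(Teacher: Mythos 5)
Your case decomposition (heavy--heavy, light--light, heavy versus light sibling) is exactly the paper's, and the combinatorial half of the argument is sound --- in two places you are even more careful than the paper: you prove $1\le t_w\le\Delta$ by induction on the level, which is needed to justify that heavy nodes stay in the first half of the epoch, and in the light--light case with $z_{w_1}\ne z_{w_2}$ you verify $c_j\le m$ so that the blocks of offsets $\{(z-1)m+1,\dots,zm\}$ are genuinely disjoint, a point the paper asserts without detail.

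There is, however, a genuine gap: you prove that the \emph{defined} quantities $t_v$, $z_v$ and the ranks $c$ are distinct where needed, but never that a node actually transmits at the offset determined by those defined quantities. The round in which a node transmits in Procedure {\tt T-R} is governed by the value it has \emph{computed}, and equating the two is the bulk of the paper's proof. For a light node, and for a heavy node all of whose children are light, the intended value ($z_v$, respectively $t_v$) is recovered beforehand in Procedure {\tt Slot Learning}, so there your assumption can be discharged by citing that procedure. But for a heavy node $v$ with at least one heavy child, $t_v$ is not available to $v$ before the epoch in which it must transmit: by the labeling scheme, $v$ learns $t_v$ \emph{during} Procedure {\tt T-R} itself, as the $C$-component of the message sent in the previous epoch by its unique heavy child $u$ whose label records $t_u=t_v$ (the child with $L(2)=1$). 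Justifying that $v$ actually receives this message requires the statement of the lemma one epoch earlier (so that $u$ is alone among its siblings when it transmits), together with the observation that $v$'s other neighbors are silent in that epoch; the argument is therefore implicitly an induction on epochs, interleaving ``slots are correctly computed'' with ``no sibling collisions.'' Without this step, your assertion ``a heavy sibling $w$ transmits at offset $t_w$'' is unjustified precisely for heavy nodes that have heavy children, and the heavy--heavy and cross cases do not go through.
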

\begin{proof}
The nodes that transmit in Procedure  {\tt T-R} are either heavy or light with a heavy parent.

 According to the definition of $t_v$, the value $t_v$ for a heavy node $v$ is unique among the heavy siblings of $v$. We show that the node $v$ computes $t_v$ correctly in Procedure  {\tt T-R}.

 First assume that $v$ is a heavy node all of whose children are light. According to the labeling scheme $\Lambda$, the integer $t_v$ is coded using the terms $L(1)$ at the nodes in $T'_v$. The node $v$ correctly computes $t_v$ in Procedure {\tt Slot Learning} by collecting the terms $L(1)$ from all the nodes in $T'_v$.

Next assume that $v$ is a heavy node which has at least one heavy child. According to the labeling scheme $\Lambda$, there exists exactly one heavy child $u$ of $v$ such that $L(5)=1$ for $u$. This implies that $t_v=t_u$. In Procedure  {\tt T-R}, after receiving the messages from its children, the node $v$ learns $t_u$ and then sets $t_v=t_u$.  Therefore, $v$ computes $t_v$ correctly.

Let $v$ be a heavy node at level $i$. Therefore, it transmits in round $t_1+2(h-i)\Delta+t_v$. Since $t_v$ is unique and $v$ correctly computes $t_v$ before transmitting, $v$ is the only node among its siblings that transmits in this round.

A light node transmits in Procedure  {\tt T-R} only if the term $L(4) \ne 0$ at this node. Suppose that there exist two siblings $v_1$ and $v_2$ at level $i$ that transmit in the $(h-i+1)$-th epoch. Let the first components of the term $L(4)$ at $v_1$ and $v_2$ be, respectively, the integers $c_1\le\lceil\frac{1}{4}(\lfloor \log \Delta \rfloor+1)\rceil$ and $c_2\le\lceil\frac{1}{4}(\lfloor \log \Delta \rfloor+1)\rceil$. According to Procedure  {\tt T-R}, $v_1$ transmits in round $\tau_1=t_1+2(h-i)\Delta+\Delta+(z_{v_1}-1)\lceil\frac{1}{4}(\lfloor \log \Delta \rfloor+1)\rceil+c_1$ and $v_2$ transmits in round $\tau_2= t_1+2(h-l)\Delta+\Delta+(z_{v_2}-1)\lceil\frac{1}{4}(\lfloor \log \Delta \rfloor+1)\rceil+c_2$. If $z_{v_1} \ne z_{v_2}$, then $\tau_1\ne \tau_2$. Hence suppose that $z_{v_1} = z_{v_2}$. According to the labeling scheme $\Lambda$, if
 $v_1$ and $v_2$ are siblings and satisfy $z_{v_1}=z_{v_2}$, then the first components of the term $L(4)$ at these nodes are distinct integers  $c_1 \ne c_2$.
Therefore, $\tau_1\ne \tau_2$. This proves that no two light siblings transmit in the same round in Procedure  {\tt T-R}.

It remains to consider the case of siblings $v_1$ and $v_2$, such that $v_1$ is heavy and $v_2$ is light. According to Procedure  {\tt T-R}, $v_1$ transmits in the time interval $[t_1 +2(h-l)\Delta +1, t_1 +2(h-l)\Delta + \Delta]$, and $v_2$ transmits after time $t_1 +2(h-l)\Delta + \Delta$. Hence they do not transmit in the same round.
\end{proof}

\begin{lemma}\label{lem:topology}
After the $(h-j)$-th epoch of Procedure  {\tt T-R}, every heavy node $v$ at level $j$ correctly computes $T_v$.
\end{lemma}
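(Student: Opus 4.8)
The plan is to prove the statement by (strong) downward induction on the level $j$, i.e.\ by induction on the epoch $h-j+1$ in which the node $v$ itself transmits. Recall that in {\tt T-R} the nodes at level $h-i+1$ transmit during epoch $i$, so a heavy node $v$ at level $j$ transmits in epoch $h-j+1$, and all of its children, which lie at level $j+1$, transmit one epoch earlier, namely in the $(h-j)$-th epoch. Since $v$ is silent and listens during epoch $h-j$, and since by Lemma~\ref{lem:slot} no two children of $v$ transmit in the same round, $v$ successfully receives the message of every child that transmits at all during that epoch. Thus, right after the $(h-j)$-th epoch, $v$ has at its disposal exactly the messages on which its computation of $T_v$ is based, and it remains to check that these messages suffice to reconstruct $T_v$ correctly.

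For the base case I would take a heavy node $v$ all of whose children are light, i.e.\ a node with $M(3)=M(5)=1$; such a node has no heavy proper descendant, since every descendant of a light node is light, so the induction hypothesis is never invoked here. Each light child $u$ of $v$ has learned its own subtree $T_u$ during Procedure {\tt Slot Learning} and transmits $T_{z_u}\cong T_u$. Node $v$ partitions the received messages according to the transmitted tree $Q_d$; because the trees $T_1,\dots,T_q$ in the sequence ${\cal S}$ are pairwise non-isomorphic, this partition coincides with the partition of the light children of $v$ by their value $z_u$, which is exactly the partition used to define the terms $L(4)$ in the labeling scheme $\Lambda$. Within each class the designated representatives (those with $L(4)\ne 0$) are precisely the ones that transmit, and the pairs $(B(i),b_i)$ they carry encode the binary representation of the class size $y_d$. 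Hence $v$ decodes each $y_d$ and reconstructs $T_v$ by attaching $y_d$ copies of $Q_d$, for every class $d$; this is the correct $T_v$, establishing the base case.

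For the inductive step I would consider a heavy node $v$ at level $j$ having at least one heavy child (so $M(3)=1$, $M(5)=0$), assuming the claim for all heavy nodes at levels greater than $j$. The light children of $v$ are treated exactly as in the base case, yielding the correct light part of $T_v$. Each heavy child $u'$ lies at level $j+1$, so by the induction hypothesis $u'$ has correctly computed $T_{u'}$ after the $(h-j-1)$-th epoch, and therefore transmits the correct subtree $T_{u'}$ during the $(h-j)$-th epoch; node $v$ receives it (again by Lemma~\ref{lem:slot}) and attaches it without modification. Combining the light part and the heavy part, $v$ obtains the correct $T_v$, which completes the induction.

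The main obstacle is the light-children part, which is identical in the base case and the inductive step: one must argue that what $v$ hears from its light children---only a logarithmic number of representatives per isomorphism class actually transmit, precisely to avoid collisions---still determines the exact multiplicities $y_d$. This requires carefully matching three partitions (by transmitted tree $Q_d$, by $z$-value, and by the classes used to set $L(4)$), using the non-isomorphism of the $T_k$ in ${\cal S}$, and then verifying that the bits $b_i$ gathered from the representatives reassemble the binary representation of the class size. The heavy-children part, by contrast, is routine once Lemma~\ref{lem:slot} guarantees collision-free reception, since the induction hypothesis delivers each $T_{u'}$ ready-made.
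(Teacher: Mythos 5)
Your proposal is correct and follows essentially the same route as the paper's proof: downward induction on the level, with the base case being heavy nodes all of whose children are light (decoding each class multiplicity $y_d$ from the $L(4)$ pairs of the transmitting representatives), and the inductive step attaching the subtrees received from heavy children, with Lemma~\ref{lem:slot} guaranteeing collision-free reception throughout. If anything, your write-up is slightly more careful than the paper's: you state the induction direction consistently (children at level $j+1$ transmitting in epoch $h-j$, where the paper's text misstates the hypothesis as concerning level $j-1$), and you make explicit both that light nodes have only light descendants and that the partition by transmitted tree coincides with the partition by $z$-value because the trees in ${\cal S}$ are pairwise non-isomorphic.
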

\begin{proof}
We prove this lemma by induction on the level $j$. For the base case, consider a heavy node without heavy children. We prove that such a node $v$  correctly computes $T_v$.

Let $v$ be a heavy node  at level $j$ all of whose children are light. Let $u_1$, $u_2$, $\dots$, $u_k$ be those children from which $v$ received messages in the epoch
$h-j$. According to the Procedure  {\tt T-R}, let $R_1,R_2,\cdots, R_e$ be the disjoint sets of children of $v$ such that all nodes in the same set have sent the message with the same tree. For $1\le d\le e$, let  $R_d=\{u'_1, u_2', \dots,u'_a\}$ and let the corresponding tree be $Q_d$. All the nodes in $R_d$ are light and the trees rooted at these nodes are $Q_d$. The node $v$ got the pairs $(B(1),b_1)$, ..., $(B(a), b_a)$, where $B(i)$ is the binary representation of the integer $i$, corresponding to the term $L(4)$ at the respective nodes.
According to the labeling scheme $\Lambda$, the string $s$ which is the concatenation of the strings $b_1, \dots, b_a$ is the binary representation of the integer $g_d$, where $g_d$ is the total number of children of $v$ which have the same tree  $Q_d$ rooted at them. This implies that in the tree $T_v$, there are $g_d$ copies of $Q_d$ attached to $v$, for $1\le d\le e$. This proves that $v$ correctly computes $T_v$ after receiving all the messages from its children in epoch $h-j$. Thus the base case is proved.

As the induction hypothesis, suppose that every heavy node of level $j-1$ correctly computes $T_v$.
Let $v$ be a heavy node at level $j$. If $v$ has no heavy child, then the lemma is true according to the base case. Let $v$ be a node with at least one heavy child.
Let $u_1,\dots, u_{k_1}$ be the light children of $v$ from which $v$ received a  message in epoch $h-j$, and let $u'_1,\dots ,u'_{k_2}$ be the heavy children of $v$ from which $v$ received a message in epoch $h-j$. Now, the tree $T_v$ is computed by attaching to $v$ subtrees rooted at its light children and subtrees rooted at its heavy children. The first attachments are explained in the base case. By Lemma \ref{lem:slot}, all heavy children of  $v$ transmit in different rounds. Hence, $v$ receives messages from each of its heavy children, and the message includes the subtree rooted at the respective heavy child. By the induction hypothesis, each of these children of $v$ computed the subtree rooted at itself correctly, and sent it to $v$ in the $(h-j)$-th epoch.
 The node $v$ attaches to itself all these subtrees rooted at its heavy children. Therefore, $v$ computes $T_v$ correctly.
 This proves the lemma by induction.
\end{proof}

We are now ready to prove our  main positive result.

\begin{theorem}
Upon completion of Algorithm {\tt  Tree Topology Recognition}, all nodes of a tree correctly output the topology of the tree and place themselves in it. The algorithm uses labels  of length
$O(\log\log \Delta)$ and works in time $O(D\Delta)$, for trees of maximum degree $\Delta$ and diameter $D$.
\end{theorem}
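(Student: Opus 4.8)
The plan is to assemble the theorem from the three lemmas already established, together with a direct analysis of Procedure {\tt Final} and a round count. The label length is not a separate obstacle: the construction already shows that the marker part of every label is a fixed string of length $7$ and that each of the five terms $L(0),\dots,L(5)$ is a binary string of length $O(\log\log\Delta)$, so the scheme $\Lambda$ has length $O(\log\log\Delta)$. It therefore remains to prove correctness and the time bound $O(D\Delta)$.

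For correctness I would argue in the order in which the procedures run. First, by the lemma on Procedure {\tt Parameter Learning}, after that procedure every node knows $\Delta$, its own level, and the height $h$; these are exactly the quantities each node needs in order to compute, from its label, the round numbers prescribed in the later procedures. Next I would record that at the start of Procedure {\tt Final} every node $v$ already knows its own subtree $T_v$. For heavy $v$ this is Lemma~\ref{lem:topology}; for a light $v$ whose parent is heavy, and indeed for every node $w$ lying in such a $T_v$, this is guaranteed by the second stage of Procedure {\tt Slot Learning}, where the {\tt Round-Robin} run over $T_v$ lets every node of $T_v$ learn the whole labeled tree $T_v$ and hence its own $T_w$. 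Combining these, every node knows $T_v$ before {\tt Final} begins. In particular, applying Lemma~\ref{lem:topology} at level $j=0$ to the root $r$ (which is heavy) shows that after the $h$-th epoch of Procedure {\tt T-R} the root correctly computes $T_r=T$. The only hypothesis Lemma~\ref{lem:topology} relies on is that siblings never collide during {\tt T-R}, which is precisely Lemma~\ref{lem:slot}; I would make this dependency explicit.

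It then remains to verify Procedure {\tt Final}. Here $r$ broadcasts $T_r$ in round $t_1+2h\Delta+1$, and an easy induction on the level $l$ shows that a node at level $l$ receives the ancestor sequence $(T_r,T_{w_p},\dots,T_{w_1})$ in round $t_1+2h\Delta+l$, appends its own (already known) $T_v$, and forwards it; since at each level only the propagation from the unique parent is involved, no collisions arise. Thus every node ends up knowing $T_r$ together with the isomorphism types of the subtrees hanging off every ancestor on its root path. Each node then outputs $T_r$ and selects any node $v'$ of $T_r$ whose ancestor-subtree sequence matches its own. I would check that this is a legitimate topology-recognition output: matching the entire sequence of ancestor subtrees guarantees an automorphism of $T$ carrying $v$ to $v'$, hence an isomorphism $f$ from $T$ to the output copy $T_r$ with $f(v)=v'$, even though $v'$ need not be unique.

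Finally, for the time bound I would simply sum the durations stated for the four procedures: $m^2+3h$ for {\tt Parameter Learning}, $2m^2$ for {\tt Slot Learning}, $2h\Delta$ for {\tt T-R} (that is, $h$ epochs of $2\Delta$ rounds each), and $h$ for {\tt Final}, where $m=\lceil\frac{1}{4}(\lfloor\log\Delta\rfloor+1)\rceil=O(\log\Delta)$ and $h=\lceil D/2\rceil$. Since $m^2=O(\log^2\Delta)=O(\Delta)$ and $h=O(D)$, the total is $O(\log^2\Delta)+O(D)+O(D\Delta)=O(D\Delta)$. I expect the main obstacle to be not any single hard estimate but the bookkeeping of the correctness glue: making sure that \emph{every} node, not only the heavy ones covered by Lemma~\ref{lem:topology}, knows its own subtree before {\tt Final}, and pinning down that the possibly non-unique self-placement in {\tt Final} is nevertheless a valid answer via the automorphism argument.
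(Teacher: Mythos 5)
Your proposal is correct and follows essentially the same route as the paper's proof: correctness is assembled from the lemma on Procedure {\tt Parameter Learning} together with Lemmas \ref{lem:slot} and \ref{lem:topology} applied at the root, Procedure {\tt Final} is then analyzed directly, the label length is read off the construction of $\Lambda$, and the time bound follows by summing the durations of the four procedures with $m=O(\log\Delta)$ and $h=O(D)$. The extra details you supply --- the explicit automorphism argument showing that the possibly non-unique self-placement is still a valid topology-recognition output, the collision-freedom check in {\tt Final}, and the observation that light nodes already know their subtrees from the second stage of {\tt Slot Learning} --- are points the paper's own proof leaves implicit, and your round count ($3m^2+4h+2h\Delta$ in total) is in fact more careful than the paper's, which states the end round as $3m^2+3h+h\Delta$ and omits the final $h$ rounds, though this does not affect the $O(D\Delta)$ conclusion.
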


\begin{proof}
Since $r$ is a heavy node at level zero, according to Lemma \ref{lem:topology}, $r$ computes the tree $T_r$ at the end of the $h$-th epoch of Procedure {\tt T-R}.
$T_r$ is the entire tree.
 After computing it, $r$ transmits this tree to all nodes during Procedure {\tt Final}. Hence upon completion of this procedure, every node learns the topology of the tree $T$. Since, in Procedure {\tt Final}, every node learns additionally the sequence of subtrees rooted at all of its ancestors, it places itself correctly in $T_r$. This proves the correctness of Algorithm {\tt  Tree Topology Recognition}.

According to the labeling scheme $\Lambda$, the label of every node has two parts. The first part is a vector $M$ of constant length and each term of $M$ is of constant length. The second part is a vector $L$ of constant length and each term of $L$ is of length $O(\log \log \Delta)$. Therefore, the length of the labeling scheme $\Lambda$ is $O(\log\log \Delta)$.

Algorithm {\tt  Tree Topology Recognition} ends after round $t_1+2h\Delta=t_0+2m^2+2h\Delta=3m^2+3h+h\Delta$.
Since $m$ is in $O(\log \Delta)$ and $h$ is in $O(D)$,
the time complexity of Algorithm {\tt  Tree Topology Recognition} is $O(D\Delta)$.
\end{proof}

\subsection{The lower bound}

In this section, we prove that any topology recognition algorithm using a labeling scheme of length $O(\log\log \Delta)$
 must use time at least $\Omega(D\Delta^{\epsilon})$, for any constant $\epsilon<1$,  on some tree of diameter $D \geq 4$ and maximum degree $\Delta \geq 3$.
 We split the proof of this lower bound into three parts, corresponding to different ranges of the above parameters, as the proof is different in each case.
 
 \vspace*{0.5cm}
\noindent
{\bf Case 1:  $\Delta$ bounded,  $D$ unbounded}

In this case we need to show a lower bound $\Omega(D)$.

\begin{lemma}\label{time2}
Let $D  \geq 4$ be any integer,  let $\Delta \ge 3$ be any integer constant and let $c>1$ be any real constant.
For any tree $T$ of maximum degree $\Delta$ consider a labeling scheme LABEL($T$) of length at most $c\log \log \Delta$. Let $TOPO$ be any algorithm that solves topology recognition for every tree $T$ of maximum degree $\Delta$ using the labeling scheme LABEL($T$). Then there exists a tree $T$
of maximum degree $\Delta$ and diameter $D$ for which $TOPO $ must take time $\Omega(D)$.
\end{lemma}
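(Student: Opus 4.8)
The plan is to combine a locality (light-cone) argument, valid in any radio network, with a pigeonhole/counting argument that exploits the crucial feature of this regime: since $\Delta$ is a constant, a labeling scheme of length at most $c\log\log\Delta$ uses only $\kappa \le 2^{c\log\log\Delta+1}=2(\log\Delta)^c=O(1)$ distinct labels.

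First I would establish the locality fact: for any deterministic algorithm run on a labeled tree, the entire state of a node $v$ at the end of round $t$ -- in particular whether it has halted and what it outputs -- is a function only of the labeled ball $B(v,t)$, by which I mean the subtree induced by all nodes at distance at most $t$ from $v$ together with their labels. This is proved by induction on $t$: what $v$ hears in round $t$ is determined by the transmit/listen actions of its neighbors, which by the inductive hypothesis depend only on their labeled balls of radius $t-1$, all contained in $B(v,t)$; the no-collision-detection and background-noise rules do not interfere, since what $v$ receives is simply a function of the multiset of its neighbors' actions in that round. In particular, if the algorithm halts at $v$ within $t$ rounds in two trees whose labeled balls $B(v,t)$ coincide, then $v$ halts in the same round and produces the same output in both.

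Next I would build the hard family. Fix $D$ and set $\ell=\lfloor D/2\rfloor$, $s=D-\ell=\Theta(D)$. Each tree $T_x$ consists of a bare \emph{neck} -- a path $v=p_0,p_1,\dots,p_\ell=m$ -- together with a \emph{caterpillar} spine $m=q_0,q_1,\dots,q_s$, where for an index set of positions $1,\dots,s-1$ we attach a pendant leaf to $q_i$ iff the $i$-th bit of $x\in\{0,1\}^{s-1}$ is $1$. A fixed pendant at $m$ (and, if $\Delta>3$, further fixed leaves there) makes the maximum degree exactly $\Delta$ while leaving every spine node of degree at most $3\le\Delta$; one checks that the longest path runs from $v$ to $q_s$, so the diameter is exactly $\ell+s=D$. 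The node $m$ is the unique spine endpoint adjacent to a bare path of length $\ell$, so any isomorphism $T_x\to T_{x'}$ must fix the reading direction of the spine and hence force $x=x'$; thus the $|X|=2^{s-1}=2^{\Omega(D)}$ trees are pairwise non-isomorphic. The essential property is that for every $x$ and every $t<\ell$ the ball $B(v,t)$ is the \emph{same} bare path on $t+1$ nodes, independent of $x$.

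Finally I would run the counting argument. Fix the scheme $\mathrm{LABEL}$ and suppose, for contradiction, that $\mathit{TOPO}$ halts within time $t$ on every tree of $\mathcal{F}$ for some $t<\ell$. Since $B(v,t)$ has the same (neck) topology in all $T_x$, the \emph{labeled} ball $B(v,t)$ of $T_x$ is determined solely by the labels of its $t+1=O(t)$ nodes, giving at most $\kappa^{\,t+1}=2^{O(t)}$ possibilities. Choosing the threshold $\tau=\Theta(D)$ so that both $\tau<\ell$ and $2^{O(\tau)}<|X|=2^{\Omega(D)}$, any $t<\tau$ makes the number of labeled balls smaller than $|X|$, so by the pigeonhole principle two trees $T_x\ne T_{x'}$ share an identical labeled ball $B(v,t)$. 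By the locality fact, $v$ then reaches the same state, halts at the same round, and outputs the same tree in both, which is wrong in at least one since $T_x\not\cong T_{x'}$ -- contradicting correctness of $\mathit{TOPO}$. Hence $\mathit{TOPO}$ must take time at least $\tau=\Omega(D)$ on some tree of $\mathcal{F}$. I expect the only delicate points to be making the locality induction fully airtight under the collision rules and the bookkeeping that pins the diameter to exactly $D$ and the maximum degree to exactly $\Delta$ while keeping both $\ell$ and $s$ of order $D$ and the neck $x$-independent; the counting itself is routine once $|X|=2^{\Omega(D)}$ and $|B(v,t)|=O(t)$ are in place.
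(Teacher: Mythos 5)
Your proposal is correct, and it follows the same core strategy as the paper's proof of this lemma: place the observing node at the end of a long bare path (your ``neck,'' the paper's line $A_1$ of length $\Theta(D/c)$), put all the distinguishing structure beyond the far end of that path, and then pigeonhole -- since $\Delta$ and $c$ are constants there are only $O(1)$ labels, so the number of possible labeled views of the bare path within time $\tau=\Theta(D)$ is $2^{O(D)}$, while the family contains $2^{\Omega(D)}$ pairwise non-isomorphic trees; both arguments conclude only for sufficiently large $D$, which is all that $\Omega(D)$ requires. The differences are in the implementation, and they trade off against each other. For indistinguishability, you prove a self-contained light-cone lemma (the state of $v$ after round $t$ is a function of the labeled ball $B(v,t)$, using that balls in trees are geodesically convex), whereas the paper invokes its ``history'' machinery from the preliminaries (nodes at distance greater than $\tau$ cannot reach the root within $\tau$ rounds, so the history is a labeled subtree of $A_1$); these are equivalent in force here, and yours is arguably the cleaner, more standard formalization. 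For the hard family, the paper attaches $x_i\in\{0,\dots,\Delta-1\}$ leaves to the deep leaves of a height-$\Theta(D)$ tree with all internal degrees equal to $\Delta$, so pairwise non-isomorphism is immediate from the multiset of bottom-level degrees and the count $\binom{k+\Delta-1}{\Delta-1}$ needs no case analysis; your caterpillar with pendant-bit encoding gives the needed $2^{\Omega(D)}$ trees more directly, but non-isomorphism is the genuinely delicate point you flag: one must rule out the spine-reversal isomorphism (e.g.\ for $\Delta=3$, where $m$ is not degree-distinguished), which your fixed pendant at $m$ together with the bare neck does handle, since a reversal would force pendants onto neck positions or change the node count. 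So: correct, same route at the level of ideas, with a different hard family and a different (more elementary and general) key lemma standing in for the paper's history concept.
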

\begin{proof}
We first assume that $D$ is even. The case when $D$ is odd will be explained later.
It is enough to prove the lemma for $D>7$. We construct a class of trees $\cT(D)$ as follows.
Let $h_1=\lfloor\frac{D+8c-6}{8c}\rfloor>1$. Let $h_2=\frac{D}{2}$. Since $c>1$, we have $h_2>h_1$. Also, $h_2\ge 4c(h_1-1)-3$.

Let $A_1$ be a line of length $h_1-1$ with endpoints $r$ and $s_1$. Let $A_2$ be a rooted tree of height $h_2-1$ with root $s_2$  of degree $\Delta-1$, such that every non-leaf node other than $s_2$  has degree $\Delta$.
Let $T$ be the tree rooted at $r$, constructed by adding the edge between the nodes $s_1$ and $s_2$, and attaching an additional node of degree 1 to each of the two leaves of $A_2$ which are at distance $2(h_2-1)$.
The total number of leaves of $T$  at level $h_1+h_2-1$ is $(\Delta-1)^{h_2-1}-2$. Let $v_1,v_2,\cdots,v_{(\Delta-1)^{h_2-1}-2}$ be the leaves of $T$ at level $h_1+h_2-1$. Let $(x_1,x_2,\cdots,x_{(\Delta-1)^{h_2-1}-2})$ be a sequence of integers where $0\le x_i\le \Delta-1$. We construct a tree $T_x$ from $T$ by attaching $x_i$ leaves to the node $v_i$, for $1 \le i\le (\Delta-1)^{h_2-1}-2$. The diameter of the tree $T_x$ is $2h_2=D$.
Let $\cT(D)$ be a maximal set of pairwise non-isomorphic trees among the trees $T_x$, cf. Fig. \ref{fig:fig7}. 

\begin{figure}[h]
\centering
\includegraphics[width=0.5\textwidth]{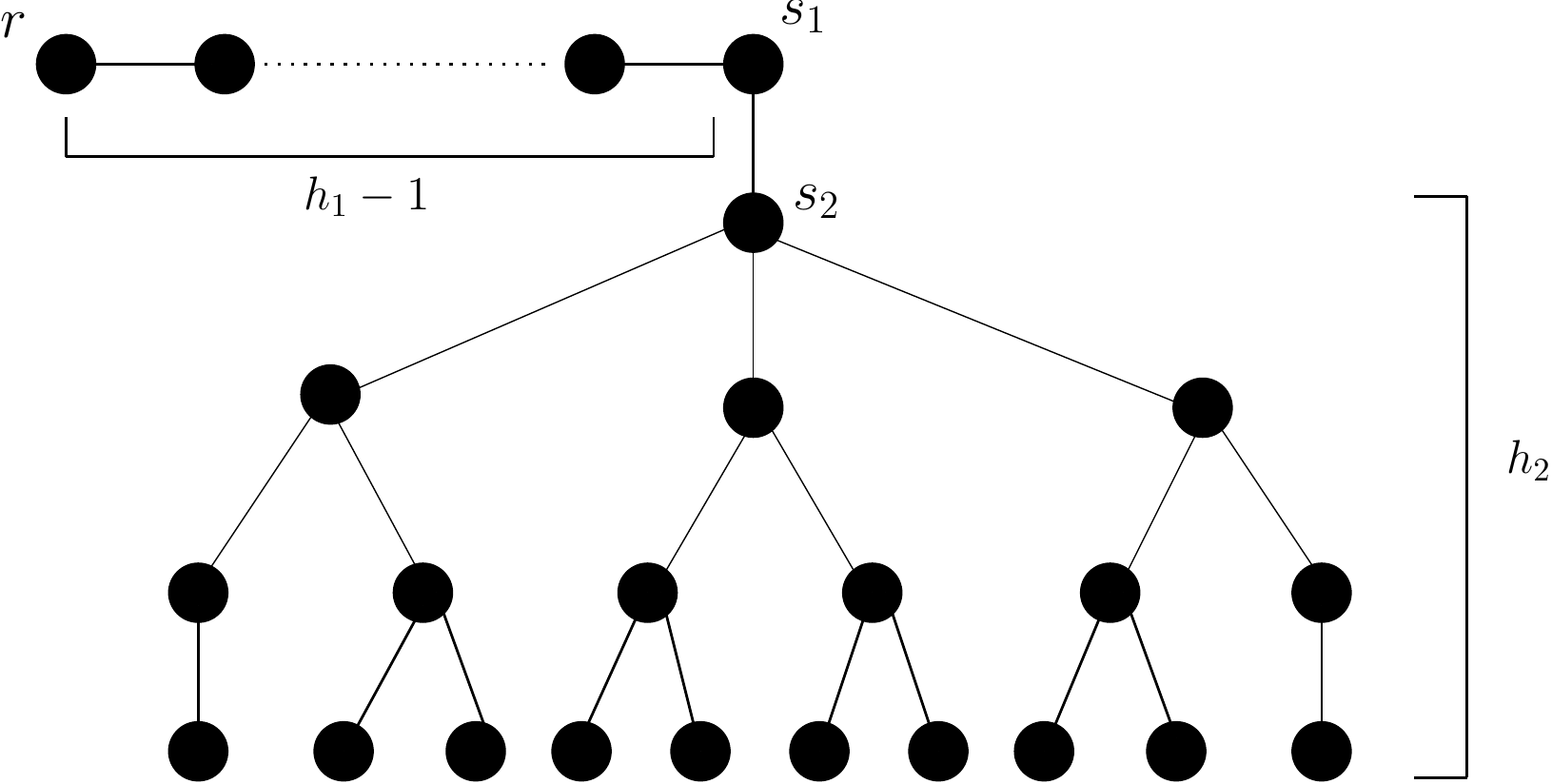}
\caption{Example of a tree in $\cT(D)$ for $D$ even}
\label{fig:fig7}
\end{figure}

Let $k=(\Delta-1)^{h_2-1}-2$ and let $z=(z_1,z_2,\cdots,z_{\Delta})$ be the sequence of integers, such that $0 \le z_i\le k$ and $\sum_{i=1}^{k}z_i=k$. The number of  such sequences $z$ is ${k+\Delta-1 \choose \Delta-1}$.

 Let $T(z)$ be a tree  in $\cT(D)$, such that for each $1\le j\le \Delta$, the number of nodes in $\{v_1,v_2,\cdots,v_{k}\}$ of degree $j$ in $T(z)$ is $z_j$.  Note that for two sequences $z'\ne z''$, the trees $T(z')$ and $T(z'')$ are non-isomorphic. Let $\cT'$ be a maximal set of pairwise non-isomorphic trees among the trees $T(z)$. Then, $|\cT'|= {k+\Delta-1 \choose \Delta-1}$. Since $\cT' \subset \cT(D)$, therefore, $|\cT(D)|\geq{k+\Delta-1 \choose \Delta-1} \ge (\frac{k+\Delta-1}{\Delta-1})^{\Delta-1}$.

 Consider an algorithm {\it TOPO} that solves topology recognition for every tree $T'$ in $\cT(D)$ in time at most $\tau=h_1-2$. The history $H(TOPO, \tau)$ of the root $r$ is the labeled subtree of $T'$ spanned by the nodes in the line $A_1$.

 There are at most $2^{c\log \log \Delta +1}=2(\log \Delta)^c$ different possible labels.
Hence, the total number of possible histories $H(TOPO, \tau)$ is at most $\left(2(\log \Delta)^c\right)^{h_1-1}$. Let $X=\left(2(\log \Delta)^c\right)^{h_1-1}=2^{h_1-1+c(h_1-1)\log\log \Delta}$.
For sufficiently large $D$, we have\\ $|\cT(D)|\ge  (\frac{k+\Delta-1}{\Delta-1})^{\Delta-1}=(\frac{(\Delta-1)^{h_2-1}-2+\Delta-1}{\Delta-1})^{\Delta-1} \ge ((\Delta-1)^{h_2-2}-2)^{\Delta -1}>2^{\frac{h_2-2}{2}(\Delta-1)\log (\Delta-1)}$.
Hence $|\cT(D)|> 2^{\frac{h_2-2}{2}(\Delta-1)\log (\Delta-1)}\ge 2^{\frac{4c(h_1-1)-5}{2}\log (\Delta-1)} > 2^{h_1-1+c(h_1-1)\log\log \Delta}=X$, for sufficiently large $D$.

Therefore, for sufficiently large $D$, there exist two trees $T'$ and $T''$ in $\cT(D)$, such that the roots of the two trees have the same history. It follows that the root $r$ in $T'$ and the node $r$ in $T''$ output the same tree as the topology.
This is a contradiction, which proves the lemma for even $D$.
For odd $D$, do the above construction for $D-1$ and attach one additional  node of degree 1 to one of the leaves.
Then the same proof works with $D$ replaced by $D-1$.
\end{proof}

\vspace*{0.5cm}
\noindent
{\bf Case 2:  $\Delta$ unbounded,  $D$ bounded}

In this case, we need to show a lower bound $\Omega(\Delta^{\epsilon})$, for any constant $\epsilon<1$. The following lemma proves a stronger result.

\begin{lemma}\label{time3}
Let $\Delta \ge 3$ be any integer, let $D \ge 4$ be any integer constant, and let $c>0$ be any real constant.
For any tree $T$ of maximum degree $\Delta$, consider a labeling scheme LABEL($T$) of length at most $c\log \log \Delta$. Let $TOPO$ be an algorithm that solves topology recognition for every tree of maximum degree $\Delta$ and diameter $D$ using the labeling scheme LABEL($T$). Then there exists a tree $T$ of maximum degree $\Delta$ and diameter $D$ for which $TOPO $ must take time $\Omega(\frac{\Delta}{(\log \Delta)^c})$.
\end{lemma}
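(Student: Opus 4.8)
The plan is to adapt the history-counting and pigeonhole method of Lemma \ref{time2} to the regime where the diameter is a constant and the maximum degree is the large parameter. Since $D$ is fixed, no node is far from the root, so the lower bound can no longer come from the depth of the tree; instead it must come from the radio bottleneck at a single high-degree node, through which at most one neighbor can be heard per round.

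First I would fix the construction. Assume $D$ is even and set the height to the constant $h=D/2$ (the odd case is handled, as in Lemma \ref{time2}, by building for $D-1$ and attaching one extra pendant node). I would take a rigid ``backbone'' of constant depth rooted at the center $r$, ending in a distinguished node $a$ that is allowed degree about $\Delta$, and let the family $\cT(D)$ vary only in the subtrees hanging below $a$: each child of $a$ carries one small gadget from a repertoire (for instance a node with between $0$ and $\Delta-1$ pendant leaves), the number of such children ranging up to $\Theta(\Delta)$, with the backbone chosen so that every resulting tree has diameter exactly $D$ and maximum degree $\Delta$. Two such trees are isomorphic iff their multisets of gadgets agree, so counting multisets gives $|\cT(D)| \ge 2^{\Omega(\Delta)}$ pairwise non-isomorphic trees, each of which $TOPO$ must output distinctly.

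Next I would bound the number of distinct histories $H(TOPO,\tau)$ that $r$ can have after $\tau$ rounds. Because $D$ is constant, all information distinguishing the trees must pass up through $a$, which receives from at most one child in any round. As in the feasibility argument preceding Theorem \ref{lem:feasibility}, two children carrying the same label that have so far received the same messages behave identically and collide, conveying nothing; hence in each round the new observation at $a$ lies in a set whose size is controlled by the number of available labels, at most $2(\log\Delta)^c$ since labels have length at most $c\log\log\Delta$. Arguing inductively that the history through round $t$ determines the transmissions of $a$ and $r$ in round $t{+}1$, I would conclude that the number of possible histories after $\tau$ rounds is at most $B^{\tau}$ for a bound $B$ governed by the label count. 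Comparing this with $|\cT(D)| \ge 2^{\Omega(\Delta)}$: if $\tau$ were $o(\Delta/(\log\Delta)^c)$ then $B^{\tau} < |\cT(D)|$, so by the Pigeonhole principle two non-isomorphic trees in $\cT(D)$ would give $r$ identical histories, forcing identical output and contradicting the correctness of $TOPO$; hence $TOPO$ must run for time $\Omega(\Delta/(\log\Delta)^c)$ on some tree.

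I expect the delicate step to be the per-round information bound in the radio model: one must argue cleanly that what $a$ can newly learn in a round stays limited, despite the fact that a child with a non-trivial gadget can, over time, acquire and relay information, and despite the two-way interaction in which $a$'s downward transmissions (themselves functions of its current history) influence how the children respond. The inductive bookkeeping, showing that the entire observable history is a function of the sequence of label-classes heard together with the silence/collision pattern, is where the $2(\log\Delta)^c$ factor must be inserted carefully so that the final comparison yields exactly the claimed $\Omega(\Delta/(\log\Delta)^c)$ bound.
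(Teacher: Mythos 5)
Your construction and overall strategy coincide with the paper's: a constant-length backbone from the root to a single high-degree node ($a$ in your notation, $s$ in the paper), whose children carry between $0$ and $\Delta-1$ pendant leaves; a count of $2^{\Omega(\Delta)}$ pairwise non-isomorphic trees; and a pigeonhole argument comparing this count to the number of possible histories of the root. The gap is in your per-round information bound. You claim that ``in each round the new observation at $a$ lies in a set whose size is controlled by the number of available labels, at most $2(\log \Delta)^c$,'' and later that the observable history is a function of the sequence of label-classes heard plus the collision pattern. This is false in the radio model with unbounded message size: the message a child $v_i$ of $a$ transmits is an arbitrary function of its label \emph{and of everything it has heard from its own pendant leaves}, and that knowledge is determined by which labels occur uniquely among those leaves --- a subset of the label space, hence up to $2^{2(\log \Delta)^c}$ possibilities. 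A single successful transmission can therefore convey $\Theta((\log \Delta)^c)$ bits, not $\Theta(\log\log \Delta)$ bits. (Relatedly, the collision argument you import from Section \ref{sec:feas} applies only to nodes whose sole neighbor is their parent, i.e., to the pendant leaves; two children of $a$ with the same label need not behave identically, since their gadgets differ.) Your bound cannot be repaired by ``careful bookkeeping,'' because taken at face value it proves $\Omega(\Delta/\log\log \Delta)$, which is false for this very class: for $c=2$, say, a scheme can let each of $\Theta(\Delta/(\log \Delta)^2)$ designated children of $a$ learn a $\Theta((\log \Delta)^2)$-bit chunk of the multiset description from the unique-label pattern of its own leaves (in $O((\log\Delta)^2)$ parallel rounds) and relay it to $a$ in one round, finishing in time $o(\Delta/\log\log \Delta)$. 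Indeed, the paper's own upper-bound algorithm exploits exactly this effect: the terms $L(0)$, $L(1)$, $L(3)$ distribute the binary representation of a large integer over the labels of a subtree, whose root then relays it in one message.

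The fix is precisely the step the paper takes: count histories per successful child of the high-degree node as (label of the child) times (subset of uniquely-labeled leaves below it), i.e., at most $2(\log \Delta)^c \cdot 2^{2(\log \Delta)^c}$ possibilities per round, together with a factor $(2(\log \Delta)^c)^{D-1}$ for the backbone. This yields at most $(2(\log \Delta)^c)^{D-1+\tau} \cdot 2^{2\tau(\log \Delta)^c}$ histories, and comparing with the $2^{\Delta/2-1}$ non-isomorphic trees gives $\tau = \Omega(\Delta/(\log \Delta)^c)$ --- the $(\log \Delta)^c$ in the denominator of the lemma arises exactly from this corrected count, not from the number of labels. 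A minor secondary point: since every tree in the class must have diameter exactly $D$ and maximum degree $\Delta$, your gadget range $[0,\Delta-1]$ admits degenerate trees (e.g., all gadgets empty); the paper avoids this by forcing each child to carry at least $\lfloor \Delta/2 \rfloor$ leaves, which also makes the non-isomorphism count immediate.
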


\begin{proof}
Let $S$ be a star with the root $s$, where the degree of $s$ is $\Delta$. Let $v_1,v_2,\cdots,v_{\Delta}$ be the leaves. Let $S'$ be a line of length $D-3$ with endpoints $r$ and $s'$. Let $x=(x_1,x_2,\cdots,x_{\Delta})$ be a sequence of integers such that $\lfloor \frac{\Delta}{2}\rfloor\le x_i \le {\Delta-1}$, for $1\le i\le \Delta$. We construct a tree $T_x$  rooted at the node $r$ by adding the edge  between the nodes $s$ and $s'$ and attaching $x_i$ leaves to the node $v_i$, for $1\le i\le \Delta$.
Let $\cT(\Delta)$ be a maximal set of pairwise non-isomorphic trees among the trees $T_x$, cf. Fig. \ref{fig:fig6}.
Then $|\cT(\Delta)|\ge {\Delta+\lceil\frac{\Delta}{2}\rceil-1 \choose \lceil\frac{\Delta}{2}\rceil -1}\ge \left(\frac{{\frac{3\Delta}{2}-1}}{\frac{\Delta}{2}}\right)^{\frac{\Delta}{2}-1}\ge 2^{\frac{\Delta}{2}-1}$.

\begin{figure}[h]
\centering
\includegraphics[width=0.5\textwidth]{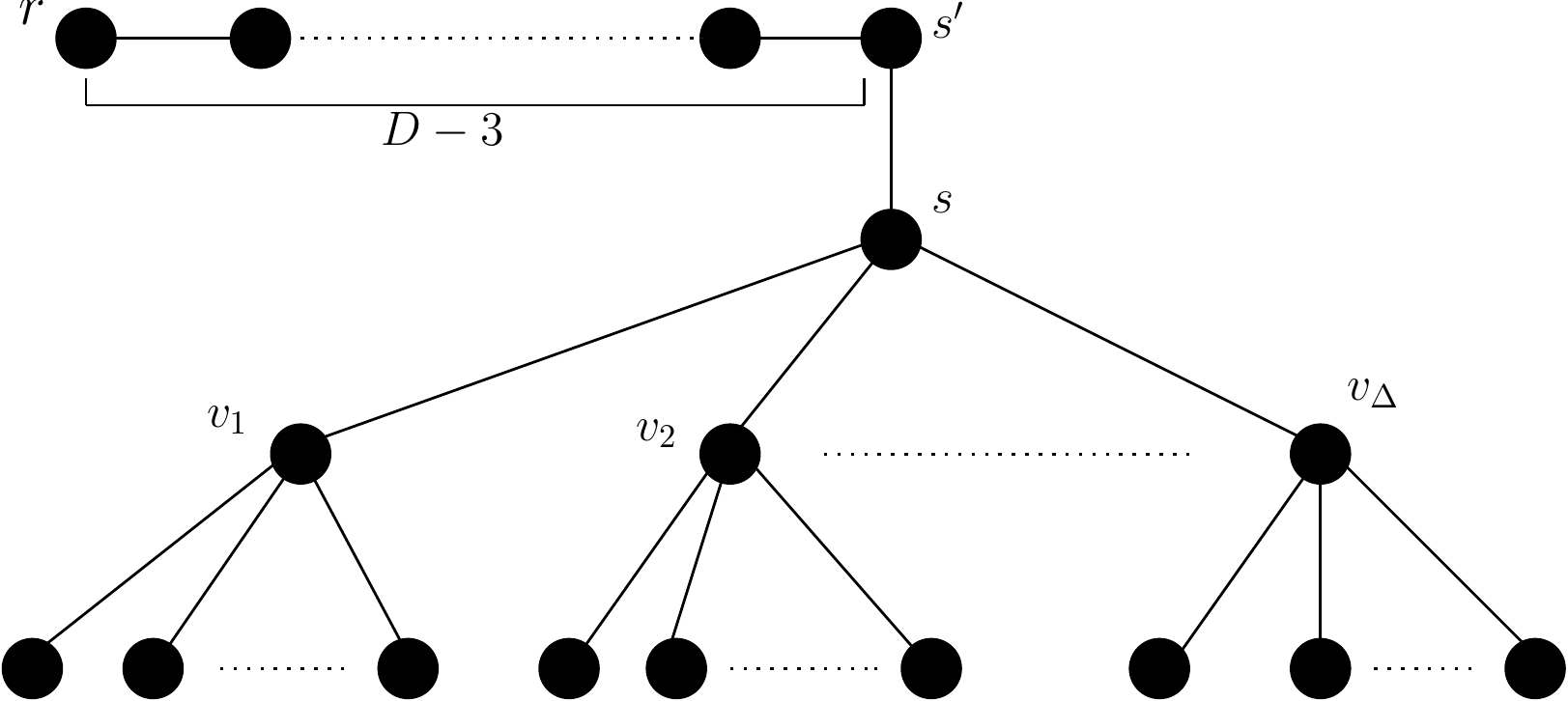}
\caption{Example of a tree in $\cT(\Delta)$ for $D$ even}
\label{fig:fig6}
\end{figure}

 Consider an algorithm {\it TOPO} that solves topology recognition for every tree $T$ in $\cT(\Delta)$ in time at most $\tau\le\frac{\Delta}{8(\log \Delta)^c}$. Now, as explained in Section \ref{sec:feas}, the only children from which the node $v_i$ can hear are the children of $v_i$ with the unique labels. The set of labels of the children from which the node $v_i$ can hear is a subset of the set of all possible labels of length at most $c\log\log \Delta$.

 There are at most $2{(\log \Delta)}^c$ possible different labels of this length,
and there are at most $2^{2{(\log \Delta)}^c}$ different possible subsets of the set of possible labels. Since, in time $\tau$, at most $\tau$ nodes from the set $\{v_1,v_2,\cdots,v_{\Delta}\}$ can successfully transmit to $r$, the history $H(TOPO,\tau)$ is a labeled subtree of the tree $T$ that  contains the  path from $r$ to $s$,
at most $\tau$ nodes from the set  $\{v_1,v_2,\cdots,v_{\Delta}\}$, and all the children with unique labels of each of these at most $\tau$ nodes.
Therefore,
the number of possible histories $H(TOPO,\tau)$ of the root $r$ is at most $(2{(\log \Delta)}^c)^{D-1}\left(2{(\log \Delta)}^c 2^{2{(\log \Delta)}^c}\right)^{\tau}$. Let $X=(2{(\log \Delta)}^c)^{D-1}\left(2{(\log \Delta)}^c 2^{2{(\log \Delta)}^c}\right)^{\tau}$. Then $$X=(2{(\log \Delta)}^c)^{D-1+\tau}\left( 2^{2\tau{(\log \Delta)}^c}\right)\le (2{(\log \Delta)}^c)^{D-1+\tau} \cdot  2^{\frac{\Delta}{4}}<2^{\frac{\Delta}{2}-1}\le |\cT(\Delta)|,$$ for sufficiently large $\Delta$. Therefore, for sufficiently large $\Delta$, there exist two trees $T'$ and $T''$ in $\cT(\Delta)$, such that the roots of the two trees have the same history. If follows that the root $r$ in $T'$ and the root $r$ in $T''$ output the same tree as the topology. This is a contradiction, which proves the lemma.
\end{proof}

\vspace*{0.5cm}
\noindent
{\bf Case 3: unbounded $\Delta$ and $D$ }

Let $\Delta \ge 3$, $D \ge 4$ be integers. We first assume that $D$ is even. The case when $D$ is odd will be explained later. It is enough to prove the lower bound for
$D \ge 6$.
Let $h=\lfloor\frac{D}{6}\rfloor$ and $g=\frac{D}{2}-h$. Then $2h \le g \le 2h+2$.
Let $P$ be a line of length $g$ with nodes $v_1$, $v_2$, $\cdots$, $v_{g+1}$, where $v_1$ and $v_{g+1}$ are the endpoints of $P$. We construct from $P$ a class of trees called {\it sticks} as follows.

Let $x=(x_1,x_2,\cdots,x_{g})$ be a sequence of integers, with $0 \le x_i \le \Delta-2$. Construct a tree $P_x$ by attaching $x_i$ leaves to the node $v_i$ for $1 \le i \le g$, cf. Fig. \ref{fig:fig3}.  Let $\cP$ be the set of all sticks constructed from $P$. Then $|\cP|=(\Delta-1)^{g} $. Let $\cP=\{P_1,P_2,\cdots, P_{(\Delta-1)^{g}}\}$.

Let $S$ be a rooted tree of height $h$, with root $r$ of degree $\Delta -1$,  and with all other non-leaf nodes of degree $\Delta$. The nodes in $S$ are called {\it basic nodes}.
Let $Z=\{w_1,w_2, \cdots ,w_z\}$, where $z= {(\Delta-1)^{h}}$, be the set of leaves of $S$. Consider a sequence $y=(y_1,y_2, \cdots, y_z)$, for $1\le y_i \le (\Delta-1)^{g}$. We construct a tree $T_y$ from $S$ by attaching to it the sticks in the following way: each leaf $w_i$ is identified with the node  $v_1$ of the stick $P_{y_i}$, for $1\le i\le z$, cf. Fig. \ref{fig:fig2}. We will say that the stick $P_{y_i}$ is {\em glued} to node $w_i$. The diameter of each tree $T_y$ is $D$.
For odd $D$, do the above construction for $D-1$ and attach one additional  node of degree 1 to one of the leaves.

\begin{figure}[h]
\centering
\includegraphics[width=0.5\textwidth]{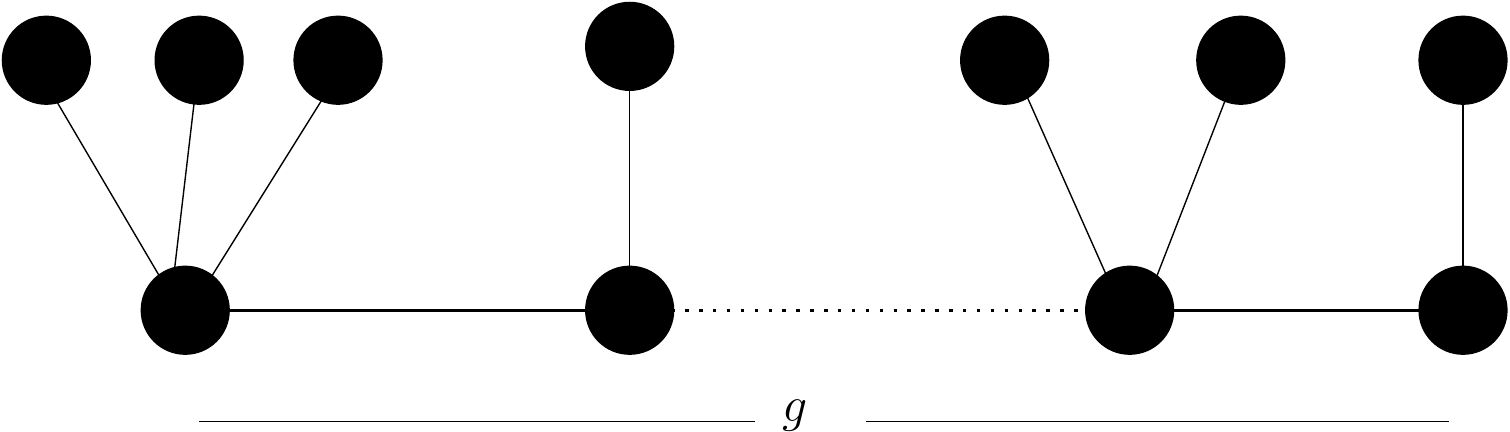}
\caption{Example of a stick}
\label{fig:fig3}
\end{figure}

\begin{figure}[h]
\centering
\includegraphics[width=0.5\textwidth]{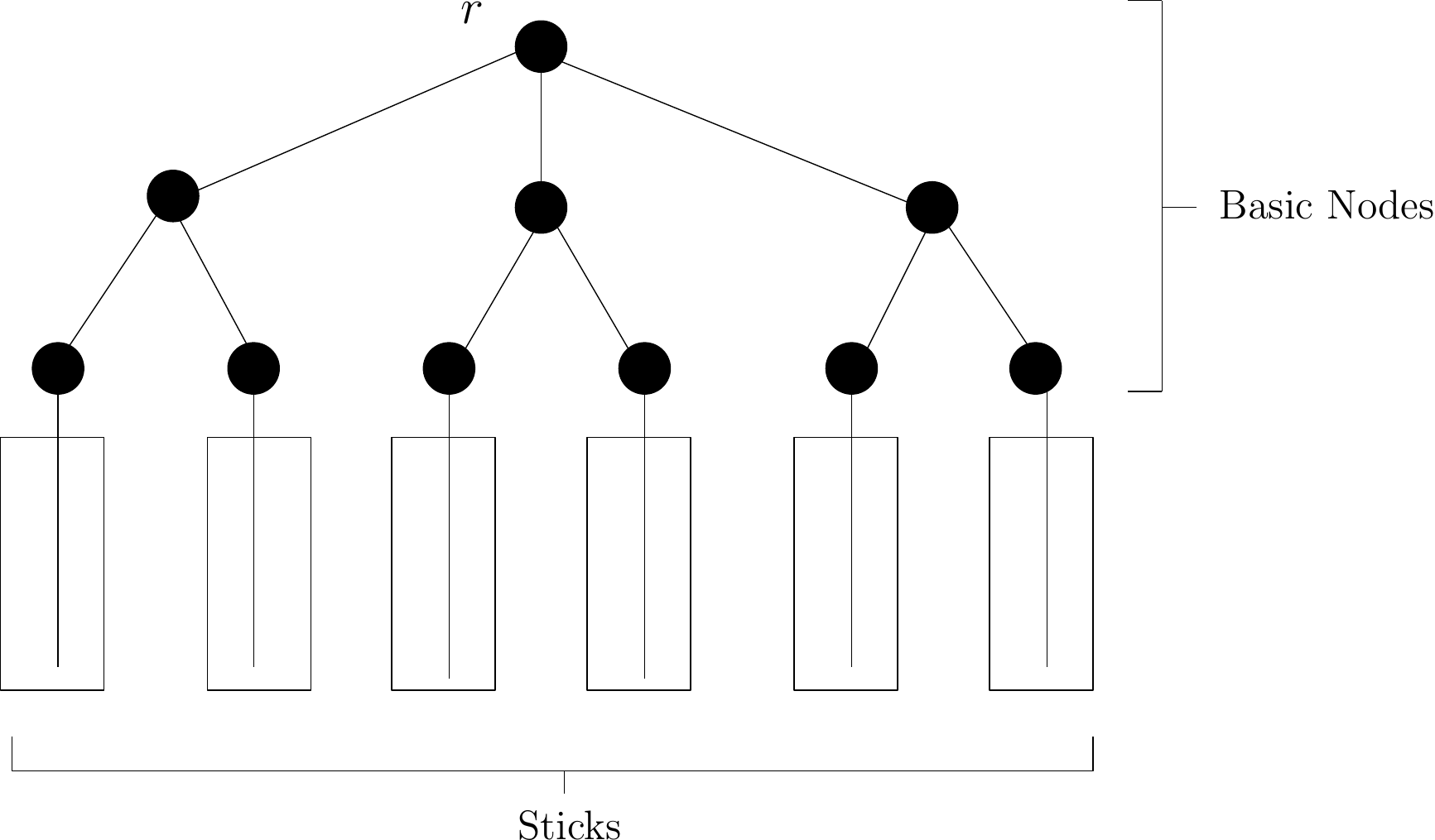}
\caption{An Example of a tree  in $\cT(\Delta,D)$ with basic nodes and sticks}
\label{fig:fig2}
\end{figure}

Let $\cT(\Delta,D)$ be a maximal set of pairwise non-isomorphic trees among the trees $T_y$. Then, $|\cT(\Delta,D)| \ge \frac{((\Delta-1)^{g})^z}{z!} \ge \frac{((\Delta-1)^{g})^z}{z!}   \ge (\Delta -1)^{h (\Delta-1)^h}$.
%

Consider any time $\tau >0$.
For any tree $T \in \cT(\Delta,D)$, consider any labeling scheme $L(T)$ and let $\cA$ be any algorithm that solves topology recognition in every tree $T \in  \cT(\Delta,D)$ in time $\tau$, using the labeling scheme $L(T)$. The following lemma gives an upper bound on the number of basic nodes that can belong to a history of the root $r$.

\begin{lemma}\label{lem:time}
Let $B$ be the number of basic nodes of level $i$ that can reach $r$ within time $\tau$, according to algorithm $\cA$. Then  $ B \le \frac{\tau^{i}}{i!}$ if $\tau\ge i$, and $B=0$, otherwise.
\end{lemma}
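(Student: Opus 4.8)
The plan is to build an injection from the set of level-$i$ basic nodes that reach $r$ within time $\tau$ into the set of strictly increasing length-$i$ sequences of integers from $\{1,\dots,\tau\}$, and then to bound the size of the latter set.

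First I would dispose of the case $\tau<i$. By the definition of ``reaches $r$ within time $\tau$'', a node $v$ at level $i$ reaches $r$ only if there is a simple path $v=u_0,u_1,\dots,u_i=r$ of length $i$ together with integers $t_0<t_1<\cdots<t_{i-1}\le\tau$. A strictly increasing sequence of $i$ integers in $\{1,\dots,\tau\}$ exists only when $\tau\ge i$; hence $B=0$ when $\tau<i$, which is the second part of the claim.

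Now assume $\tau\ge i$. Since $T$ is a tree, the path witnessing that a level-$i$ basic node $v$ reaches $r$ is the unique simple path from $v$ to $r$, so I may fix a witnessing time sequence and define $\Phi(v)=(t_0,\dots,t_{i-1})$. The heart of the argument is to prove that $\Phi$ is injective. Suppose two distinct level-$i$ basic nodes $v\ne v'$ reach $r$ with $\Phi(v)=\Phi(v')=(t_0,\dots,t_{i-1})$, along paths $v=u_0,\dots,u_i=r$ and $v'=u'_0,\dots,u'_i=r$. Because both paths terminate at $r$, they coincide from some index on; let $j$ be the least index with $u_j=u'_j$. Then $j\ge 1$ (as $u_0=v\ne v'=u'_0$) and $u_{j-1}\ne u'_{j-1}$, while $u_{j-1}$ and $u'_{j-1}$ are both children of the common node $u_j=u'_j$. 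By the definition of reaching applied to the path of $v$, in round $t_{j-1}$ the node $u_{j-1}$ is the only child of $u_j$ that transmits; applying the same definition to the path of $v'$ in the identical round $t_{j-1}$, the node $u'_{j-1}$ is the only child of $u_j$ that transmits. Since $u_{j-1}\ne u'_{j-1}$, these two statements are incompatible, a contradiction. Hence $\Phi$ is injective.

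Finally, the number of strictly increasing sequences $t_0<\cdots<t_{i-1}$ with entries in $\{1,\dots,\tau\}$ equals $\binom{\tau}{i}\le \tau^{i}/i!$. Since $\Phi$ embeds the level-$i$ basic nodes that reach $r$ into this set, I conclude $B\le\binom{\tau}{i}\le \tau^{i}/i!$, as claimed. I expect the injectivity step to be the only genuine obstacle: one must argue carefully that two distinct paths to $r$ sharing a common time schedule would force two different children of a common parent to each be the unique transmitter of that parent in a single round, which the collision model forbids.
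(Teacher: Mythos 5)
Your proof is correct, but it takes a genuinely different route from the paper's. The paper argues by induction on the level: it observes that the children of $r$ that reach $r$ must have pairwise distinct last transmission rounds, so the $i$-th latest among them is reached within time $\tau-i+1$; applying the inductive hypothesis inside each child's subtree and bounding $\frac{1}{j!}\sum_{i=1}^{\tau} i^{j}$ by $\frac{1}{j!}\int_0^{\tau}x^{j}\,dx$ then yields $\frac{\tau^{j+1}}{(j+1)!}$. You instead encode each level-$i$ node that reaches $r$ by (a chosen witness of) its increasing sequence of successful rounds and prove this encoding injective: two distinct level-$i$ nodes with the same time sequence would force two distinct children of the first common node on their root paths to each be \emph{the} unique transmitting child of that node in the same round, a contradiction. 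This is sound: in a tree both root paths are the unique simple paths of length exactly $i$, levels force the indices of the two paths to align, and the minimal agreement index $j\ge 1$ produces the required pair of distinct siblings, so the count is at most the number of increasing $i$-sequences in $\{1,\dots,\tau\}$, namely $\binom{\tau}{i}\le \tau^{i}/i!$ (and $0$ when $\tau<i$). Your route avoids both the induction and the integral estimate and in fact gives the slightly sharper bound $\binom{\tau}{i}$; the paper's formulation, for its part, generalizes naturally to counting nodes that reach an arbitrary subtree root, which is what makes its level-by-level recursion go through. Both establish the lemma as stated.
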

\begin{proof}
Any node $v$ at level $i$ cannot reach $r$ in time smaller than $i$. Therefore, $B=0$ for $\tau<i$.

For $\tau \ge i$, we prove the lemma by induction.  The lemma is true for  $i=1$, as the number of basic nodes of level 1 that reach $r$ within time $\tau$ is at most $\tau$. Suppose that the lemma is true for the nodes of levels $\le j$.

Let $u_1,u_2,\cdots, u_\ell$ be the children of $r$ that reach $r$ within time $\tau$. Let $t_1' > t_2'\cdots >t_\ell'$ be the last rounds in which the nodes $u_1,u_2,\cdots, u_\ell$, respectively, transmit. Then, $t_i' \le \tau-i+1$, for $1 \le i \le l$. Let $F(u_i,t')$ be the number of nodes at level $j+1$ of $T$ that reach node $u_i$ within time $t'$.
The nodes in the $(j+1)$-th level in $T$ are the nodes in the $j$-th level in the subtrees which are rooted at children of $r$. Hence, by the induction hypothesis, the number of nodes at level $j+1$ that reach $r$ within time $\tau$ is at most $\sum_{i=1}^{l} F(u_i,\tau-i+1) \le \sum_{i=1}^{l} \frac{(\tau-i+1)^j}{j!} = \frac{1}{j!}\sum_{i=1}^{l} (\tau-i+1)^j\le \frac{1}{j!}\sum_{i=1}^{\tau} i^j \le \frac{1}{j!}  \int_0^{\tau} x^j\,dx =\frac{\tau^{j+1}}{(j+1)!}$.
This implies that the statement of the lemma is true for the level $j+1$. Hence, the lemma follows by induction.
\end{proof}

The next lemma gives the announced lower bound on the time of topology recognition for the class $\cT(\Delta,D)$.

\begin{lemma}\label{time1}
Let $\epsilon<1$ be any positive real constant, and let $c>1$ be  any real constant.
For any tree $T \in \cT(\Delta,D)$, consider a labeling scheme LABEL($T$) of length at most $  c\log \log \Delta$. Then there exist integers $\Delta_0, D_0>0$ such that any  algorithm that solves topology recognition for every tree $T \in \cT(\Delta,D)$, where $\Delta \geq \Delta_0$ and $D \geq D_0$,  using the scheme LABEL($T$), must take time $\Omega(D\Delta^\epsilon)$ for some tree $T\in \cT(\Delta,D)$.
\end{lemma}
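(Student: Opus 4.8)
I would prove this by contradiction, via a Pigeonhole argument on the histories of the root $r$, in the same spirit as Lemmas \ref{time2} and \ref{time3} but now combining the depth obstruction of the first with the width obstruction of the second. Fix a small constant $c'>0$ and suppose that $TOPO$ recognizes the topology of every tree in $\cT(\Delta,D)$ within time $\tau \le c' D\Delta^{\epsilon}$, using labels of length at most $c\log\log\Delta$. The plan is to bound from above the number of distinct histories $H(TOPO,\tau)$ that can occur at $r$ over all trees of $\cT(\Delta,D)$ and all labelings of this length, and to show that for $\Delta\ge\Delta_0$ and $D\ge D_0$ this count is strictly smaller than $|\cT(\Delta,D)|\ge(\Delta-1)^{h(\Delta-1)^h}$. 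Two non-isomorphic trees that force the same history at $r$ then make $r$ output the same labeled map in both, which is the desired contradiction.

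To bound the number of histories I would exploit two independent scarcities. The first is a depth (reach) scarcity coming from Lemma \ref{lem:time}: at most $\tau^{i}/i!$ nodes of level $i$ can reach $r$ within time $\tau$ (the proof is level by level and applies verbatim to the non-basic stick nodes as well). Applied at the level $h$ of the leaves $w_1,\dots,w_z$ of $S$, it shows that $r$ can receive information out of at most $\tau^{h}/h!$ of the $z=(\Delta-1)^h$ sticks, and applied at the deeper levels $h+1,\dots,h+g$ it limits how far down each visible stick can be reconstructed. The second is a width scarcity, already isolated in Section \ref{sec:feas} and used in Lemma \ref{time3}: a node is heard by its parent only if its label is unique among its siblings, so with only $2(\log\Delta)^c$ available labels each node can forward towards $r$ only a subset of at most $2(\log\Delta)^c$ distinct sibling-labels, that is $O\!\left((\log\Delta)^c\right)$ bits, regardless of how many leaves are actually attached to it. I would therefore encode a history as the labeled subtree of $S$ that reaches $r$, together with, for every reaching node, the label-subset it forwards from the part of its stick that reaches $r$, and count the choices level by level to get an explicit upper bound on the number of histories.

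Comparing the logarithm of this bound with $\log_2|\cT(\Delta,D)|=\Omega\!\left(h(\Delta-1)^h\log\Delta\right)$ is where the specific parameters enter, and the choices $h=\lfloor D/6\rfloor$ and $g=D/2-h$ (so that $2h\le g\le 2h+2$ and the diameter is exactly $D$) are what make the two scarcities meet. The stick length $g$ is large enough that the family carries the required entropy $\Omega\!\left(h(\Delta-1)^h\log\Delta\right)$, while the height $h$ makes the reach factor $\tau^{h}/h!$ the binding constraint; isolating $\tau$ from the resulting inequality uses an $h$-th-root estimate in which $(h!)^{1/h}=\Theta(h)=\Theta(D)$ contributes the factor $D$ and $\left((\Delta-1)^h\right)^{1/h}=\Delta-1$ contributes a power of $\Delta$. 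The polylogarithmic slack introduced by the labels is harmless because $(\log\Delta)^{O(1)}=\Delta^{o(1)}$, so that, after tuning the balance between $h$ and $g$, the threshold is at least $D\Delta^{\epsilon}$ once $\Delta\ge\Delta_0(\epsilon,c)$ and $D\ge D_0(\epsilon,c)$; for $\tau\le c'D\Delta^{\epsilon}$ the history count then stays below $|\cT(\Delta,D)|$.

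The hard part will be making the two scarcities cooperate inside a single counting inequality: the reach bound controls how many sticks, and how much of each, can be seen but is blind to the labels, whereas the width bound controls the information forwarded per node but is blind to the fact that deep stick nodes may never reach $r$ at all. I expect the main technical work to be choosing the right per-stick bookkeeping so that the combined estimate is simultaneously strong enough to beat $|\cT(\Delta,D)|$ for $\tau=o(D\Delta^{\epsilon})$ and conservative enough not to overshoot the matching upper bound $O(D\Delta)$. Finally, the case of odd $D$ is handled exactly as in the construction, by running the even construction for $D-1$ and attaching one pendant node, and the verifications that every tree in $\cT(\Delta,D)$ has diameter $D$ and maximum degree $\Delta$ are routine.
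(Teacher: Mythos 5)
Your overall skeleton does match the paper's: assume $\tau$ is too small, upper-bound the number of possible histories of the root, and compare with $|\cT(\Delta,D)|\ge(\Delta-1)^{h(\Delta-1)^h}$ to extract, by the Pigeonhole principle, two non-isomorphic trees whose roots have the same history and hence the same output. You also correctly identify Lemma \ref{lem:time} as the engine of the argument. However, the decisive step of your plan --- the ``single counting inequality'' in which the reach scarcity and the width scarcity ``cooperate'', via ``per-stick bookkeeping'' --- is exactly the step you do not carry out, and you yourself flag it as the hard part and the main technical work. As written, the proposal never produces an explicit upper bound on the number of histories, so the comparison with $|\cT(\Delta,D)|$ is never actually performed; this is a genuine gap, not a routine verification.

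The missing idea is that the width scarcity is not needed at all, and pursuing it is what makes your plan look hard. Since every stick is glued to a basic node of level $h$, the nodes of a stick can reach $r$ only through that basic node; hence the most the root can ever learn through a level-$h$ basic node $v$, beyond what it learns from other nodes, is the \emph{entire labeled stick} glued to $v$. One can therefore be maximally generous: a history is determined by (i) the labels of the at most $q$ basic nodes that reach $r$, where $q\le h\Delta^{h\epsilon}2^{2h}$ by Lemma \ref{lem:time} (which the paper only needs for basic nodes, so your claim that it extends to stick nodes is never used), and (ii) a choice of one fully labeled stick per such node, out of at most $p=(\Delta-1)^{2h+2}\left(2(\log \Delta)^c\right)^{(2h+2)\Delta}$ labeled sticks. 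This yields at most $\left(2p(\log \Delta)^c\right)^q$ histories, whose logarithm is at most roughly $5qc(2h+2)\Delta\log \Delta\le 5h\Delta^{h\epsilon+1}2^{2h}c(2h+2)\log \Delta$, and this loses to $\log|\cT(\Delta,D)|\ge h(\Delta-1)^h\log(\Delta-1)$ for large $h$ and $\Delta$ precisely because $\epsilon<1$: the deficit $\Delta^{h(1-\epsilon)-1}$ absorbs $2^{2h}$ and all polynomial factors in $h$. Note also that your worry about being ``conservative enough not to overshoot the matching upper bound $O(D\Delta)$'' is unfounded: a lossy over-count of histories can only weaken the lower bound's exponent, never contradict the upper bound, and here even the enormous give-away of whole sticks still wins. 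Your width-scarcity observation (a stick path node can learn only the set of unique labels among its leaf children) is correct in itself, but building a level-by-level forwarding account around it would additionally have to handle information encoded in reception timing, and is simply unnecessary; dropping it collapses your sketch into the paper's proof.
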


\begin{proof}
Wee first do the proof for even $D$.
Consider an algorithm {\it TOPO} that solves topology recognition for every tree $T\in \cT(\Delta,D)$ in time $\tau \le (\frac{D}{6}-1)\Delta^\epsilon \le h\Delta^\epsilon$ with a labeling scheme $LABEL(T)$ of length at most $ c\log \log \Delta$.
For a scheme of this length, there are at most $2^{c\log \log \Delta +1}=2(\log \Delta)^c$ different possible labels.
According to Lemma \ref{lem:time}, for $1 \le i \le h$ the number of basic nodes of level $i$, that reach $r$ within time $\tau$ is at most $\frac{\tau ^i}{i!}$, if $\tau\ge i$, otherwise there are no such nodes.

Denote by $q$ the total number of basic nodes that reach $r$ within time $\tau$.
If $\tau \ge h$, then $q\le \sum_{i=1}^{h} \frac{\tau ^i}{i!} \le h \frac{\tau ^{h}}{h} = h \frac{(h\Delta^\epsilon)^h}{h!}$.
We know that $\log (h!)=h \log h -\frac{h}{\ln 2}+\frac{1}{2} \log h +O(1) \ge h \log h -\frac{h}{\ln 2}$. Since $\ln 2 > \frac{1}{2}$, we have $\log (h!) > h \log h-2h$. Therefore, $h! > \frac{h^{h}}{2^{-2{h}}}$, and hence
 $q \le h\Delta^{h\epsilon}2^{2h}$.
If $\tau<h$, then $q\le \sum_{i=1}^{\tau} \frac{\tau ^i}{i!}\le \tau \Delta^{\tau\epsilon}2^{2\tau}\le h \Delta^{h\epsilon}2^{2h}$. Therefore, $q\le h \Delta^{h\epsilon}2^{2h}$, for all $\tau>0$.

The number of different unlabeled sticks is at most $(\Delta-1)^{2h+2}$. Nodes of each such stick can be labeled with labels of length at most $\lfloor c\log \log \Delta\rfloor$ in at most $\left(2(\log \Delta)^c\right)^{(2h+2)\Delta}$ ways, because each stick can have at most $(2h+2)\Delta$ nodes.
Therefore, the number of different labeled sticks is at most $p=(\Delta-1)^{2h+2} \left(2(\log \Delta)^c\right)^{(2h+2)\Delta}$. 

The history of the root $r$ of a tree $T\in \cT(\Delta,D)$ may include some nodes from a stick in $T$ only if the basic node at level $h$ to which this stick is glued is a node in the history. The maximum information that the root can get from a basic node $v$ at level $h$, but not from any other node at this level, is the information about the whole labeled stick glued to $v$.

The number of possible histories $H(TOPO, \tau)$ of the node $r$ is at most the product of  the number of possible labelings of the basic nodes in $H(TOPO, \tau)$ and the number of possible gluings of labeled sticks to them. Since there are at most  $q$ basic nodes in $H(TOPO, \tau)$, there are at most $(2 (\log \Delta)^c)^q$ possible labelings of these nodes. Since there are at most $p$ labeled sticks to choose from, the number of possible gluings of labeled sticks to the basic nodes in
$H(TOPO, \tau)$ is at most $p^q$. Therefore,
the number of possible histories $H(TOPO, \tau)$ of the node $r$ is at most $2^q(\log \Delta)^{cq} p^q =\left(2p(\log \Delta)^c\right)^q$.
Let $X=\left(2p(\log \Delta)^c\right)^q$.
 We have $\log X= q(\log p+ 1+c\log\log \Delta)= q+q \log p+ qc \log\log \Delta$.
  Also, $\log p=(2h+2)\log (\Delta-1)+(2h+2)\Delta(1+ c\log \log \Delta).$ Therefore, $\log X=q(1+\log p+c\log\log \Delta)= q\left(1+(2h+2)\log (\Delta-1)+(2h+2)\Delta(1+ c\log \log \Delta)+c\log\log \Delta\right) \le 5qc(2h+2)\Delta\log \Delta \le 5h \Delta^{h\epsilon+1}2^{2h} c(2h+2)\log \Delta$.
 Also, $\log |\cT(\Delta,D)|\geq  h(\Delta-1)^h \log (\Delta-1)$. Now, for any $\Delta$ and for sufficiently large $h$, we have $5h \Delta^{h\epsilon+1}2^{2h} c(2h+2)<\frac{1}{2}h\Delta^h$. Therefore, $5h \Delta^{h\epsilon+1}2^{2h} c(2h+2)\log \Delta < \frac{1}{2}h\Delta^h \log \Delta< h(\Delta-1)^h \log (\Delta-1)$, for sufficiently large $\Delta$
 and sufficiently large $h$.

 It follows that, for sufficiently large $h$ and $\Delta$, we have $\log X < \log |\cT(\Delta,D)|$.
Therefore, there exist integers $\Delta_0$ and $D_0$ such that $X <  |\cT(\Delta,D)|$, for all $\Delta\ge \Delta_0$ and $D\ge D_0$.
 Hence, for $\Delta \ge \Delta_0 $ and $D \ge D_0$, there exist two trees $T_1$ and $T_2$ in $\cT(\Delta,D)$ whose roots have the same history. Therefore, the root $r$ in $T_1$ and the root $r$ in $T_2$ output the same tree as the topology, within time $\tau$. This is a contradiction, which proves the lemma for even $D$. For odd $D$, the same proof works with $D$ replaced by $D-1$.
\end{proof}

Lemmas  \ref{time2} , \ref{time3} and  \ref{time1} imply the following theorem.

\begin{theorem}
Let $\epsilon<1$ be any positive real number.
For any tree $T$ of maximum degree $\Delta \geq 3$ and diameter $D \geq 4$, consider a labeling scheme of length $O(\log \log \Delta)$. Then any topology recognition algorithm using such a scheme for every tree $T$ must take time $\Omega(D\Delta^\epsilon)$ for some tree.
\end{theorem}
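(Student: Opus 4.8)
The plan is to derive the theorem directly as a corollary of the three lemmas already established for the three parameter regimes. The target bound $\Omega(D\Delta^{\epsilon})$ must hold for every constant $\epsilon<1$ and for all trees of maximum degree $\Delta\geq 3$ and diameter $D\geq 4$, using any labeling scheme of length $O(\log\log\Delta)$. The key observation is that the three lemmas together cover every way in which $D$ and $\Delta$ can grow, so I would organize the argument as a case split that mirrors the split already made in the text.

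First I would fix a constant $\epsilon<1$ and, given the hypothesis that the scheme has length $O(\log\log\Delta)$, fix the real constant $c>1$ for which the length is at most $c\log\log\Delta$ for all sufficiently large $\Delta$ (increasing $c$ if necessary so that $c>1$ as required by Lemmas~\ref{time2} and \ref{time1}). The three regimes are: (i) $\Delta$ bounded by a constant while $D$ is unbounded; (ii) $D$ bounded by a constant while $\Delta$ is unbounded; and (iii) both $\Delta$ and $D$ unbounded. In regime (i), Lemma~\ref{time2} produces a tree forcing time $\Omega(D)$; since $\Delta$ is a constant here, $\Delta^{\epsilon}$ is also a constant, so $\Omega(D)=\Omega(D\Delta^{\epsilon})$ as required. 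In regime (ii), Lemma~\ref{time3} gives time $\Omega\!\left(\frac{\Delta}{(\log\Delta)^{c}}\right)$, and since $D$ is a constant here, it suffices to observe that $\frac{\Delta}{(\log\Delta)^{c}}=\Omega(\Delta^{\epsilon})$ for every constant $\epsilon<1$ (because $\Delta^{1-\epsilon}$ eventually dominates any fixed power of $\log\Delta$), whence the bound is $\Omega(D\Delta^{\epsilon})$. In regime (iii), Lemma~\ref{time1} directly supplies the full bound $\Omega(D\Delta^{\epsilon})$ for all $\Delta\geq\Delta_0$ and $D\geq D_0$.

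The mildly delicate point, which I would state carefully rather than wave away, is how to formalize ``$\Delta$ bounded versus unbounded'' so that the three lemmas genuinely tile the whole parameter space with no gaps. The clean way is to argue for a fixed algorithm $TOPO$ and scheme family and show that for every pair $(D,\Delta)$ with $D,\Delta$ large enough, at least one lemma applies to produce a tree of exactly these parameters on which $TOPO$ is slow. Concretely: if $\Delta$ stays below the threshold $\Delta_0$ of Lemma~\ref{time1} we invoke Lemma~\ref{time2} (treating $\Delta$ as a constant and using that there are only finitely many such $\Delta$); if $D$ stays below $D_0$ we invoke Lemma~\ref{time3}; otherwise both exceed their thresholds and Lemma~\ref{time1} applies. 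Since the asymptotic statement of the theorem only concerns the existence of a bad tree for arbitrarily large $D$ and $\Delta$, it is enough that regime (iii) covers all sufficiently large $(D,\Delta)$, and the other two lemmas merely confirm the boundary regimes are not degenerate.

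I expect the only real obstacle to be bookkeeping with the constants, namely confirming that the constant $c$ demanded by the hypothesis is compatible simultaneously with the constraints $c>1$ in Lemmas~\ref{time2} and \ref{time1} and $c>0$ in Lemma~\ref{time3}, and that the $\log$-factor in Lemma~\ref{time3} is indeed swallowed by passing from exponent $1$ to exponent $\epsilon<1$. Both are routine: enlarging $c$ only weakens the lemmas' hypotheses on the scheme length, and the domination $\frac{\Delta}{(\log\Delta)^{c}}=\Omega(\Delta^{\epsilon})$ is immediate from $\lim_{\Delta\to\infty}\Delta^{1-\epsilon}/(\log\Delta)^{c}=\infty$. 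With these checks in place, the theorem follows immediately by selecting, for each target $(D,\Delta)$, the tree furnished by the applicable lemma.
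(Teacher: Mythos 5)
Your proposal is correct and follows essentially the same route as the paper: the paper derives the theorem directly from Lemmas \ref{time2}, \ref{time3}, and \ref{time1} via exactly the three-regime case split you describe. Your additional bookkeeping (absorbing the $(\log\Delta)^{c}$ factor into $\Delta^{1-\epsilon}$, enlarging $c$ to satisfy $c>1$, and tiling the parameter space using the thresholds $\Delta_0,D_0$) simply makes explicit what the paper leaves implicit.
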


\section{Time for small maximum degree $\Delta$ or small diameter $D$}\label{s3}

In this section we solve our problem for the remaining cases of small parameters $\Delta$ and $D$, namely, in the case when $\Delta \leq 2$ or $D\leq 3$.
We start with the case of small diameter $D$.

\subsection{Diameter $D=3$}

First we propose a topology recognition algorithm for all trees of diameter $D=3$ and of maximum degree $\Delta \geq 3$, using a labeling scheme of length $O(\log \log \Delta)$
and  working in time $O(\frac{\log \Delta}{\log \log \Delta})$.

\vspace*{0.5cm}
\noindent
{\bf Algorithm} {\tt Small Diameter T-R}

Let $T$ be a tree of diameter $3$ and maximum degree $\Delta \geq 3$, rooted at node $r$.  The node $r$ is one of the endpoints of the central edge of $T$. Since $D=3$, $r$ has exactly one child $a$ of degree larger than one, and all other children of $r$ are leaves.
Below we describe the assignment of the labels to the nodes of $T$.

\begin{enumerate}
\item The root $r$ gets the label $0$.
\item Let $u_1,u_2,\cdots,u_{k_1}$ be the children of $r$ which are leaves. Let $s$ be the string of length $(\lfloor \log k_1\rfloor+1)$ which is the binary representation of the integer $k_1$. Let $p=\lceil\frac{\lfloor \log k_1\rfloor+1}{\lfloor\log \log \Delta\rfloor}\rceil$.
    Let $b_1$, $b_2$, $\cdots$, $b_{p}$ be the substrings of $s$, each of length at most $\lfloor\log \log \Delta\rfloor$, such that $s$ is the concatenation of the substrings $b_1$, $b_2$, $\cdots$, $b_{p}$. For $1 \le i\le p-1$, the node $u_i$ gets the label $(0,B(i),b_i)$, where
    $B(i)$ is the binary representation of the integer $i$. The node $u_{p}$ gets the label $(1,B(p),b_p)$, where $B(p)$ is the binary representation of $p$. For $i >p$, the node $u_i$ gets the label $(0,0,0)$.
    \item The node $a$ gets the label $f$, where $f$ is the binary representation of the integer $p$.
\item Let $u'_1,u'_2,\cdots,u'_{k_2}$ be the children of $a$. These are leaves. Let $s'$ be the string of length $(\lfloor \log k_2\rfloor+1)$ which is the binary representation of the integer $k_2$. Let $q=\lceil\frac{\lfloor \log k_2\rfloor+1}{\lfloor\log \log \Delta\rfloor}\rceil$.
    Let $b'_1$, $b'_2$, $\cdots$, $b'_{q}$ be the substrings of $s$, each of length at most $\lfloor\log \log \Delta\rfloor$, such that $s'$ is the concatenation of the substrings $b'_1$, $b'_2$, $\cdots$, $b'_{q}$. For $1 \le i\le q-1$, the node $u'_i$ gets the label $(0,B(i),b'_i)$, where
    $B(i)$ is the binary representation of the integer $i$. The node $u'_{q}$ gets the label $(1,B(q),b'_q)$, where $B(q)$ is the binary representation of $q$. For $i >q$, the node $u'_i$ gets the label $(0,0,0)$.
 \end{enumerate}

 Every node transmits according to its label. Let $v$ be any node whose label contains three components. If the second component of the label is the binary representation of an integer $c>0$, then $v$ transmits a message that contains its label, in round $c$.

 The node with label $f$ (i.e., node $a$) waits until it gets a message from a node with a 3-component label, whose first component is 1. The node $a$
 gets all pairs $(0,B(1),b_1)$, $(0,B(2),b_2)$,..., $(1,B(x), b_x)$, where  $B(i)$ is the binary representation of the integer $i$. The node $a$ computes the concatenation $s$ of the strings $b_1$, $b_2$, $\dots$, $b_{x}$. Let $y_1$ be the integer whose binary representation is $s$. The node $a$ computes $p$ from $f$ and transmits the message  $(p,y_1)$ in round $\max\{p+1,x+1\}$.

 Similarly, the node with label 0 (i.e., node $r$) waits until it gets a message from a node with a 3-component label whose first component is 1, and it computes the integer $y_2$ from the messages it got until then, as explained above for the node $a$ computing $y_1$. The node $r$ waits for the message that arrives next. This message is $(p,y_1)$. Node $r$ computes the tree $T'$ by attaching $y_2+1$ children to $r$ and attaching $y_1$ leaves to one of these children. Then $r$ transmits $T'$. When the node with label $f$ gets the message with the tree $T'$, it learns $T'$ and retransmits it. Every node outputs the tree $T'$ after getting the message. The nodes $a$ and $r$ identify themselves in the topology by looking at their own labels. A node which learned $T'$ from $a$ identifies itself as a child of $a$ and a node which learned $T'$ from the node $r$ identifies itself as a child of $r$.

 The following lemma estimates the performance of Algorithm {\tt Small Diameter T-R}.

\begin{lemma}\label{ub:D=3}
Algorithm {\tt Small Diameter T-R} solves topology recognition for trees of maximum degree $\Delta \geq 3$ and diameter $D=3$, in time $O(\frac{\log \Delta}{\log\log \Delta})$, using labels  of length
$O(\log\log \Delta)$.
\end{lemma}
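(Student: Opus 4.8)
The plan is to establish correctness first and then read off the time and label-length bounds. I would start by fixing notation: let $k_1$ be the number of leaf children of $r$ and $k_2$ the number of children of $a$ (all leaves). Since the maximum degree is $\Delta$ we have $k_1 \le \Delta-1$ and $k_2 \le \Delta-1$; and since the diameter is exactly $3$, a longest path must run $u - r - a - u'$ through the central edge $(r,a)$, which forces $k_1 \ge 1$ and $k_2 \ge 1$ (so the ``wait for a message whose first component is $1$'' conditions are never vacuous). As the isomorphism type of $T$ is determined by the pair $(k_1,k_2)$, it suffices to show that $r$ learns this pair, reconstructs $T'\cong T$, and that every node then places itself correctly.

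The crux, and the step I expect to be the main obstacle, is verifying that every reception used by the algorithm is collision-free; this requires tracking round by round exactly which neighbors transmit. First I would note that a leaf child of $r$ transmits exactly once, in the round equal to the integer coded by the second component $B(i)$ of its label, and that only $u_1,\dots,u_p$ have a positive such component. Hence in each round $i$ with $1\le i\le p$, node $u_i$ is the \emph{unique} transmitting neighbor of $r$: node $a$ remains silent until round $\max\{p+1,q+1\}$, and the leaves $u_j$ with $j>p$ carry the label $(0,0,0)$ and never transmit. Thus $r$ receives $b_1,\dots,b_p$ on schedule and, concatenating them in index order, recovers the binary string $s$ and hence $k_1$. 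The same analysis at $a$---whose only transmitting neighbors in rounds $1,\dots,q$ are $u'_1,\dots,u'_q$, while $r$ stays silent throughout---shows that $a$ recovers $k_2$.

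Next I would check the single cross-transmission. The node $a$ learns $q$ from the round in which it hears the child whose first label-component is $1$, and learns $p$ from its own label $f$, so it transmits the pair $(p,k_2)$ in round $\max\{p+1,q+1\}$; by then every leaf child of $r$ has fallen silent (all transmit in rounds $\le p$), so $a$ is the unique transmitting neighbor of $r$ and the message arrives. Now $r$ holds $k_1$ (as $y_2$) and $k_2$ (as $y_1$), and builds $T'$ by giving $r$ exactly $k_1+1$ children and attaching $k_2$ leaves to the distinguished one representing $a$; this is isomorphic to $T$. I would then finish correctness with the dissemination phase: $r$ broadcasts $T'$ in the next round and, being the sole transmitter, is heard by all its neighbors (including $a$, whose children are silent by then); $a$ rebroadcasts $T'$ to its own children; finally $r$ and $a$ identify themselves from their labels ($0$ and $f$), while each remaining node declares itself a child of $r$ or of $a$ according to the neighbor from which it first heard $T'$---unambiguous, since each leaf is adjacent to exactly one of $r,a$.

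The quantitative bounds then follow directly. The algorithm halts a constant number of rounds after round $\max\{p+1,q+1\}$, and since $k_1,k_2\le\Delta-1$ the encoded string lengths $\lfloor\log k_1\rfloor+1$ and $\lfloor\log k_2\rfloor+1$ are $O(\log\Delta)$; dividing by $\lfloor\log\log\Delta\rfloor$ gives $p,q = O(\log\Delta/\log\log\Delta)$, which is the claimed running time. For the labels, each block $b_i$ has length at most $\lfloor\log\log\Delta\rfloor$, each index $B(i)$ satisfies $i\le\max\{p,q\}=O(\log\Delta/\log\log\Delta)$ and so has length $O(\log\log\Delta)$, and $f$ (the binary representation of $p$) likewise has length $O(\log\log\Delta)$; hence every label has length $O(\log\log\Delta)$, completing the proof.
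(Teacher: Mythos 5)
Your proof is correct and takes essentially the same approach as the paper's: both decode $k_1$ at $r$ and $k_2$ at $a$ from the uniquely labeled children transmitting in rounds $1,\dots,p$ (resp.\ $1,\dots,q$), verify the collision-free hand-off from $a$ to $r$ in round $\max\{p+1,q+1\}$, and bound the running time by $p,q = O(\log\Delta/\log\log\Delta)$. If anything, you spell out the round-by-round collision analysis and the $O(\log\log\Delta)$ label-length verification in more detail than the paper, which asserts these points tersely.
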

\begin{proof}
By definition, the length of the labeling scheme is $O(\log\log \Delta)$. It remains to estimate the execution time of the algorithm.
Let $y_1$ be the number of leaves attached to $a$ and $y_2$ be the number of leaves attached to $r$. According to the label assignments to the nodes in $T$, the $\lceil\frac{\lfloor \log y_1\rfloor+1}{\lfloor\log \log \Delta\rfloor}\rceil$ leaves attached to $a$ get distinct labels, whose first components represents integers from 1 to $\lceil\frac{\lfloor \log y_1\rfloor+1}{\lfloor\log \log \Delta\rfloor}\rceil$ and the concatenation of the second components represents $y_1$. Therefore, the node $a$ computes $y_1$ correctly after round $\lceil\frac{\lfloor \log y_1\rfloor+1}{\lfloor\log \log \Delta\rfloor}\rceil$. Similarly, the node $r$ computes $y_2$ correctly after round $\lceil\frac{\lfloor \log y_2\rfloor+1}{\lfloor\log \log \Delta\rfloor}\rceil$. The node $a$ is the only node which transmits in round $\max\{\lceil\frac{\lfloor \log y_1\rfloor+1}{\lfloor\log \log \Delta\rfloor}\rceil+1, \lceil\frac{\lfloor \log y_2\rfloor+1}{\lfloor\log \log \Delta\rfloor}\rceil+1\}$ and it transmits a message whose second component is the value $y_1$. After getting this message from $a$, the node $r$ learns $y_1$ and $y_2$, and hence it computes the topology of the tree correctly. Every other node learns the topology within the next round after the node $r$ transmits the topology of the tree. The algorithm ends in round $\max\{\lceil\frac{\lfloor \log y_1\rfloor+1}{\lfloor\log \log \Delta\rfloor}\rceil+1, \lceil\frac{\lfloor \log y_2\rfloor+1}{\lfloor\log \log \Delta\rfloor}\rceil+1\}+1$. Since $y_1,y_2 \le \Delta$, the time complexity of the algorithm is $O(\frac{\log \Delta}{\log\log \Delta})$.
\end{proof}

The following lemma gives a lower bound on the time of topology recognition for trees of diameter 3, matching the performance of Algorithm {\tt Small Diameter T-R}.
\begin{lemma}\label{lb:D=3}
Let $\Delta \ge 3$ be any integer, and let $c>0$ be any real constant.
For any tree $T$ of maximum degree $\Delta$ consider a labeling scheme LABEL($T$) of length at most $c\log \log \Delta$. Let $TOPO$ be any algorithm that solves topology recognition for every tree of maximum degree $\Delta$ and diameter $3$ using the labeling scheme LABEL($T$). Then, for every $\Delta \ge 2$,  there exists a tree $T$ of maximum degree $\Delta$ and diameter 3, for which $TOPO $ must take time $\Omega(\frac{\log \Delta}{(\log \log \Delta)})$.
\end{lemma}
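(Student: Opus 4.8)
The plan is to reuse the double-star family of diameter-$3$ trees and run a counting argument on the possible histories of the root, exactly parallel to the proofs of Lemmas \ref{time2}--\ref{time1}, but now reading off a time lower bound rather than a length or $\Omega(D)$ bound. Concretely, I would take (a variant of) the class $\cT$ from Section \ref{sec:feas}: the root $r$ has degree $\Delta$, one distinguished child $a$, and I attach $i$ leaves to $a$, for $\lfloor\Delta/2\rfloor \le i \le \Delta-1$. Each such tree $T_i$ has maximum degree $\Delta$ and diameter $3$, the $T_i$ are pairwise non-isomorphic, and the family has size at least $\Delta/2$. The only structural difference between the $T_i$ is the number of leaves hanging off $a$, so to output the correct topology the root must, in effect, learn this number, which ranges over $\Theta(\Delta)$ values and hence carries $\Theta(\log\Delta)$ bits of information.

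Next I would bound the information that can reach $r$ within a given time $\tau$, using the history formalism introduced in Section \ref{s2}. As in the proof of Theorem \ref{lem:feasibility}, only nodes whose labels let them be heard matter: a collision is indistinguishable from silence, so what $r$ receives is completely determined by the subtree of nodes that reach $r$. For a leaf $\ell$ of $a$ to reach $r$ along the path $\ell,a,r$, there must be a round $t_0<\tau$ in which $\ell$ is the unique transmitting child of $a$; distinct leaves require distinct such rounds, so at most $\tau$ leaves of $a$ can appear in $H(TOPO,\tau)$. (Message length is unbounded, so the bottleneck is not the relay step through $a$ but the number of leaves $a$ can actually hear, namely at most one per round.) Likewise $r$ hears from at most $\tau$ of its own leaf children. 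Therefore $H(TOPO,\tau)$ is a labeled tree consisting of the node $a$, at most $\tau$ leaves of $a$, and at most $\tau$ leaves of $r$, each labeled by one of at most $2(\log\Delta)^c$ labels of length $\le c\log\log\Delta$.

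Counting the number of such labeled histories then finishes the proof. The number of distinct possible histories $X$ is at most $2(\log\Delta)^c$ (the label of $a$) times the number of multisets of size at most $\tau$ of labels for the leaves of $a$ and of $r$; this gives $\log X = O(\tau\log\log\Delta)$. If $\tau = o\!\left(\frac{\log\Delta}{\log\log\Delta}\right)$, then for sufficiently large $\Delta$ we get $\log X < \log(\Delta/2) \le \log|\cT|$, so by the pigeonhole principle two non-isomorphic trees $T_i, T_j \in \cT$ yield the same history of $r$. Then $r$ behaves identically in both and outputs the same topology, contradicting $T_i\not\cong T_j$; hence $TOPO$ must spend time $\Omega\!\left(\frac{\log\Delta}{\log\log\Delta}\right)$ on some tree of $\cT$, which is the claim (for small $\Delta$ the bound is trivial, so the asymptotics is what matters). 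The main obstacle I expect is the middle step: justifying rigorously that $H(TOPO,\tau)$ really is all the information $r$ can ever use --- i.e.\ that nodes failing to reach $r$ cannot influence $r$'s behavior even through the collisions they cause --- and pinning down the bound of at most $\tau$ reaching leaves of $a$ through the two-hop relay, since a single round of $a$ can forward arbitrarily much of what it has already collected.
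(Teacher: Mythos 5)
Your proposal is correct and follows essentially the same route as the paper: the paper's proof of this lemma reuses the same class $\cT$ from Section \ref{sec:feas}, observes that in time $\tau$ at most $\tau$ nodes can reach $a$ and at most $\tau$ can reach $r$, bounds the number of possible histories by $(2(\log\Delta)^c)^{2\tau+2}$, and applies the pigeonhole principle against $|\cT|\ge\Delta/2$ exactly as you do. The worry you raise at the end is treated in the paper as definitional (the history $H(\cA,\tau)$ is declared in the preliminaries to be the total information the root can learn, and the discussion in Section \ref{sec:feas} shows identically labeled siblings never get through), so no additional argument is needed.
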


\begin{proof}
We use the class  $\cT$ of trees from Section \ref{sec:feas}.
Consider an algorithm {\it TOPO} that solves topology recognition for every tree $T\in \cT$ in time $\tau \le \frac{\log \Delta}{4c\log\log \Delta}$ using a labeling scheme $LABEL(T)$ of length at most $  c\log \log \Delta$.

In time $\tau$, at most $\tau$ nodes can reach the node $a$, and at most $\tau$ nodes can reach the node $r$. Since there are at most $2(\log \Delta)^c$ different possible labels of length at most $c\log \log \Delta$, the total number of possible histories $H(TOPO, \tau)$ of the root $r$ is at most\\ $(2(\log \Delta)^c)^{2\tau+2} \le 2^{2(\frac{\log \Delta}{4c\log\log \Delta}+1)} ({\log \Delta)}^{\frac{\log \Delta}{2\log\log \Delta}+2}< \frac{\Delta}{2} \leq |\cT|$, for sufficiently large $\Delta$.

Therefore, for sufficiently large $\Delta$, there exist two trees $T'$ and $T''$ in $\cT$ such that the roots of the two trees have the same history. Hence the root $r$ in $T'$ and the root $r$ in $T''$ output the same tree as the topology. This is a contradiction, which proves the lemma.
\end{proof}

In view of  Lemmas \ref{ub:D=3} and  \ref{lb:D=3}, we have the following result.

\begin{theorem}
The optimal time for topology recognition in the class of trees of  diameter $D=3$ and maximum degree $\Delta \geq 3$,
using a labeling scheme of length $\Theta(\log\log \Delta)$, is $\Theta(\frac{\log \Delta}{(\log \log \Delta)})$.
\end{theorem}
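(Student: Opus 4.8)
The plan is to obtain the claimed tight bound $\Theta(\frac{\log \Delta}{\log\log \Delta})$ by matching the two one-sided bounds already established in Lemmas \ref{ub:D=3} and \ref{lb:D=3}, so the theorem is essentially a packaging step. First I would invoke Lemma \ref{ub:D=3}: it exhibits the concrete algorithm {\tt Small Diameter T-R} together with an explicit labeling scheme of length $O(\log\log \Delta)$ that solves topology recognition on every tree of diameter $D=3$ and maximum degree $\Delta \geq 3$ in time $O(\frac{\log \Delta}{\log\log \Delta})$. This simultaneously certifies that a scheme of length $\Theta(\log\log \Delta)$ suffices (the lower bound $\Omega(\log\log \Delta)$ on the scheme length coming from Theorem \ref{lem:feasibility}, whose class $\cT$ consists of diameter-$3$ trees) and that the optimal time is $O(\frac{\log \Delta}{\log\log \Delta})$ for such schemes.

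Next I would invoke Lemma \ref{lb:D=3} for the matching lower bound. The point to emphasize is that this lemma is stated for an \emph{arbitrary} labeling scheme of length at most $c\log\log \Delta$, for any real constant $c>0$; hence it applies to every scheme of length $O(\log\log \Delta)$, not merely to the particular scheme used by the upper-bound algorithm. It produces, for sufficiently large $\Delta$, a tree in $\cT$ on which any correct algorithm must run for time $\Omega(\frac{\log \Delta}{\log\log \Delta})$. Combining the two directions yields that, over the class of schemes of length $\Theta(\log\log \Delta)$, the best achievable time is $\Theta(\frac{\log \Delta}{\log\log \Delta})$, which is exactly the statement of the theorem.

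Since both ingredients are already proved, there is no genuine obstacle remaining at the level of this theorem; the only care needed is in aligning the quantifiers, namely observing that the lower bound of Lemma \ref{lb:D=3} holds uniformly for every scheme of length $O(\log\log \Delta)$ (it depends only on the constant $c$ in the length bound, via the counting of at most $2(\log \Delta)^c$ distinct labels and the resulting bound on the number of histories $H(TOPO,\tau)$), whereas the upper bound of Lemma \ref{ub:D=3} is realized by a \emph{specific} scheme of that length. Had the two bounds been proved from scratch, the hard part would have been the lower bound rather than the algorithm: it rests on the history/pigeonhole argument applied to the family $\cT$, showing that in time $\tau \le \frac{\log \Delta}{4c\log\log \Delta}$ at most $\tau$ nodes can reach each of $a$ and $r$, so the number of possible root histories stays below $|\cT|\ge \frac{\Delta}{2}$, forcing two non-isomorphic trees in $\cT$ to be confused.
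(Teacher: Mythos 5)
Your proposal is correct and follows exactly the paper's own route: the theorem is proved there simply ``in view of Lemmas \ref{ub:D=3} and \ref{lb:D=3}'', i.e., by combining the upper bound from Algorithm {\tt Small Diameter T-R} with the history/pigeonhole lower bound, just as you do. Your additional remarks on quantifier alignment (the lower bound holding for \emph{every} scheme of length $O(\log\log \Delta)$, the upper bound realized by a specific one) are accurate and consistent with the paper's reading of the statement.
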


\subsection{Diameter $D=2$}

We now consider the case of trees of diameter 2, i.e., the class of stars. Since there is exactly one star of a given maximum degree $\Delta$, the problem of topology recognition for $D=2$ and a given maximum degree $\Delta$ is trivial. A meaningful variation of the problem for $D=2$ is to consider all trees (stars) of maximum degree {\em at most} $\Delta$, for a given $\Delta$.

Let $T$ be a star with the central node $r$. The labeling scheme and the algorithm for topology recognition in $T$ are similar to Algorithm {\tt Small Diameter T-R} in the case of $D=3$.
The objective of the algorithm is for every node to learn the value of $\Delta$. A set of $\lceil\frac{1+\lfloor \log \Delta\rfloor}{\log\log \Delta}\rceil$ leaves are given distinct labels. Each such label contains two components. The first components are distinct ids from
1 to $\lceil\frac{1+\lfloor \log \Delta\rfloor}{\log\log \Delta}\rceil$, and the second components are the substrings of length $\lfloor\log \log \Delta\rfloor$ whose concatenations in increasing order of ids is the binary representation of $\Delta$. Leaves with distinct labels transmit one by one in every round, in the order of their ids, and after $O(\frac{\log \Delta}{\log \log \Delta})$ rounds, the node $r$ computes the value of $\Delta$. Then  $r$ transmits $\Delta$. Every leaf and the node $r$ output a star with degree $\Delta$, and every node places itself in this star either as the root or as a leaf.

The following lemma gives a lower bound on the time of topology recognition for stars, matching the above upper bound.
\begin{lemma}\label{star}
Let $\Delta \ge 3$ be any integer, and let $c>0$ be any real constant.
For any star $T$ of maximum degree at most $\Delta$ consider a labeling scheme LABEL($T$) of length at most $c\log \log \Delta$. Let $TOPO$ be any algorithm that solves topology recognition for every star of maximum degree at most $\Delta$, using the labeling scheme LABEL($T$). Then  there exists a star $T$ of maximum degree at most $\Delta$, for which $TOPO $ must take time $\Omega(\frac{\log \Delta}{(\log \log \Delta)})$.
\end{lemma}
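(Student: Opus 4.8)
The plan is to mirror the counting argument of Lemma \ref{lb:D=3}, adapted to the simpler (one-level) structure of stars. First I would fix a rich family of pairwise non-isomorphic stars: for each integer $d$ with $\lfloor \Delta/2\rfloor \le d \le \Delta$, let $T_d$ be the star whose center $r$ has degree $d$. Every such star has maximum degree at most $\Delta$, and distinct degrees give non-isomorphic stars, so the family $\cS=\{T_d : \lfloor\Delta/2\rfloor \le d \le \Delta\}$ satisfies $|\cS| \ge \Delta/2$. The target is to show that no algorithm using labels of length at most $c\log\log\Delta$ can solve topology recognition on all of $\cS$ within time $\tau \le \frac{\log \Delta}{4c\log\log \Delta}$.

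Next I would pin down what the center $r$ can possibly learn, reusing the reasoning from Section \ref{sec:feas}. Every leaf hears only from $r$, so two leaves carrying the same label receive identical information in every round, are in the same state throughout the execution, and hence always transmit simultaneously; consequently $r$ can hear a message only from a leaf whose label is unique among the leaves, and it can hear at most one such leaf per round (exactly one transmitter is required). Running for time $\tau$, the history $H(TOPO,\tau)$ of $r$ therefore consists of its own label together with at most $\tau$ labeled leaves that reach it, each tagged with the round in which it is heard. Since $r$'s entire behavior, and in particular its final output, is a deterministic function of this history, two stars of $\cS$ that yield the same history of $r$ force $r$ to output the same topology in both.

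Then I would bound the number of distinct histories and invoke pigeonhole. With labels of length at most $c\log\log\Delta$ there are at most $2(\log\Delta)^c$ possible labels; in each of the $\tau$ rounds $r$ either stays silent (or hears a collision) or hears one of these labeled messages, and its own label contributes one further factor. Hence the number of possible histories is at most $(2(\log\Delta)^c+1)^{\tau+1}$, whose logarithm is $(\tau+1)\bigl(1+c\log\log\Delta\bigr)$. For $\tau \le \frac{\log \Delta}{4c\log\log \Delta}$ the dominant term is $\frac{\log\Delta}{4}$, so for all sufficiently large $\Delta$ this logarithm is below $\log(\Delta/2) \le \log |\cS|$, i.e. the number of histories is strictly less than $|\cS|$. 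By the pigeonhole principle two distinct, hence non-isomorphic, stars of $\cS$ produce the same history of $r$, so $r$ outputs the same tree for both, a contradiction. Therefore $TOPO$ must take time more than $\frac{\log \Delta}{4c\log\log \Delta}$ on some star of $\cS$, giving the lower bound $\Omega\!\left(\frac{\log \Delta}{\log \log \Delta}\right)$.

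The counting is routine; the one point requiring care, and the main conceptual step, is the justification that $r$'s knowledge is fully captured by the at most $\tau$ uniquely-labeled leaves it hears, that is, that collisions among equally-labeled leaves genuinely prevent $r$ from extracting any additional information. This is exactly the indistinguishability phenomenon established for the feasibility bound of Section \ref{sec:feas}, and I would cite it rather than reprove it.
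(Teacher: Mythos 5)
Your proposal is correct and follows essentially the same route as the paper's own proof: the same family of roughly $\Delta/2$ pairwise non-isomorphic stars with center degrees between $\lfloor\Delta/2\rfloor$ and $\Delta$, the same observation (from Section \ref{sec:feas}) that the center can only hear uniquely-labeled leaves, and the same counting-plus-pigeonhole argument bounding the number of histories by roughly $(2(\log\Delta)^c)^{\tau+1} < \Delta/2$ for $\tau = O\!\left(\frac{\log\Delta}{\log\log\Delta}\right)$. The only differences are immaterial constants in the time threshold and your slightly more explicit justification of why the center's output is determined by its history.
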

\begin{proof}
Let $T_j$,
for $j=1,2,\dots, \Delta-\lfloor\frac{\Delta}{2}\rfloor$, be the star with the central node $r$ and degree $\lfloor\frac{\Delta}{2}\rfloor+j$.
Let $\cT$ be the set of all trees $T_j$, $j=1,2,\dots, \Delta-\lfloor\frac{\Delta}{2}\rfloor$. Then $|\cT| \ge \frac{\Delta}{2}$.
Consider an algorithm {\it TOPO} that solves topology recognition for every star $T\in \cT$ in time $\tau \le \frac{\log \Delta}{2c(1+\log\log \Delta)}-1$ using a labeling scheme $LABEL(T)$ of length at most $  c\log \log \Delta$.

In time $\tau$, at most $\tau$ nodes can reach the node $r$. Since there are at most $2(\log \Delta)^c$ different possible
labels of length at most $c\log \log \Delta$, the total number of possible histories $H(TOPO, \tau)$ of the root $r$ is at most
$(2(\log \Delta)^c)^{\tau+1} \le  2^{(c+c\log\log \Delta) \cdot \frac{\log \Delta}{2c(1+\log\log \Delta)}} <\frac{\Delta}{2} <|\cT|$, for sufficiently large $\Delta$.

Therefore, for sufficiently large $\Delta$, there exist two trees $T'$ and $T''$ in $\cT$ such that the roots of the two trees have the same history. Hence the root $r$ in $T'$ and the root $r$ in $T''$ output the same tree as the topology. This is a contradiction, which proves the lemma.
\end{proof}

In view of the above described algorithm and of Lemma \ref{star}, we have the following result.

\begin{theorem}
The optimal time for topology recognition in the class of trees of diameter $D=2$ (i.e., stars) and maximum degree at most $\Delta$, where $\Delta \geq 3$,
using a labeling scheme of length $\Theta(\log\log \Delta)$, is $\Theta(\frac{\log \Delta}{(\log \log \Delta)})$.
\end{theorem}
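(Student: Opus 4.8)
The plan is to assemble the claimed $\Theta$ bounds from three pieces: an upper bound on both the scheme length and the running time supplied by the algorithm described just above, a matching time lower bound from Lemma \ref{star}, and a matching length lower bound obtained by the counting argument of Section \ref{sec:feas} specialized to stars.

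First I would read off the upper bounds directly from the algorithm preceding Lemma \ref{star}. The scheme gives $\lceil(1+\lfloor\log\Delta\rfloor)/\log\log\Delta\rceil$ distinguished leaves labels consisting of an id together with a block of $\lfloor\log\log\Delta\rfloor$ bits of the binary representation of $\Delta$; each such label, and the label of the center, has length $O(\log\log\Delta)$, so the scheme has length $O(\log\log\Delta)$. Since the distinguished leaves transmit one at a time in id order and there are $O(\log\Delta/\log\log\Delta)$ of them, the center reconstructs $\Delta$, broadcasts it, and every node then outputs the star of degree $\Delta$ and locates itself; this takes $O(\log\Delta/\log\log\Delta)$ rounds. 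For the time lower bound, Lemma \ref{star} already states that, for any constant $c$, every topology recognition algorithm using a scheme of length at most $c\log\log\Delta$ must take time $\Omega(\log\Delta/\log\log\Delta)$ on some star of maximum degree at most $\Delta$. Together these pin the optimal time at $\Theta(\log\Delta/\log\log\Delta)$.

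It remains to establish the length lower bound $\Omega(\log\log\Delta)$, which I would prove by the view argument from Section \ref{sec:feas}. As noted there, two leaves of a star with equal labels behave identically and always collide, so the center $r$ can only ever hear from leaves carrying unique labels; hence the behaviour of $r$, and in particular the tree it outputs, is determined by its own label together with the set of labels appearing exactly once among the leaves. If the scheme has length $\ell$, there are fewer than $2^{\ell+1}$ possible labels, so the number of distinct views of $r$ is less than $2^{\ell+1}\cdot 2^{2^{\ell+1}}$. Taking $\cT$ to be the $\ge \Delta/2$ pairwise non-isomorphic stars of degree between $\lfloor\Delta/2\rfloor+1$ and $\Delta$ (as in Lemma \ref{star}), topology recognition forces distinct stars to induce distinct views of $r$, so $2^{\ell+1}\cdot 2^{2^{\ell+1}}\ge \Delta/2$; taking logarithms twice yields $\ell=\Omega(\log\log\Delta)$.

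I expect the routine combination of the upper bound with Lemma \ref{star} to be immediate, so the only real content is the length lower bound. The subtlety there is the bookkeeping in counting distinguishable views: one must remember that a view of $r$ is specified by both the center's own label and the \emph{set} of uniquely-occurring leaf-labels, and must verify that any two of the chosen stars in $\cT$ are genuinely non-isomorphic so that collapsing their views yields a contradiction. Once the view count $2^{\ell+1}\cdot 2^{2^{\ell+1}}$ is justified, the pigeonhole step and the double logarithm are mechanical.
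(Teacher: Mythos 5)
Your proposal is correct, and its core is exactly the paper's proof: the paper obtains this theorem in one line by combining the star algorithm described immediately before Lemma \ref{star} (scheme length $O(\log\log\Delta)$, time $O(\frac{\log\Delta}{\log\log\Delta})$) with the time lower bound of Lemma \ref{star}. Where you go beyond the paper is your third ingredient, the scheme-length lower bound $\Omega(\log\log\Delta)$ for stars. The paper's proof of this theorem contains no such step; the only place it proves a length lower bound is Section \ref{sec:feas}, whose trees are double stars of diameter $3$, so that argument does not literally apply to the class of stars of diameter $2$. Your specialization of the view argument to stars is thus a genuine (and welcome) complement: it is what actually justifies the qualifier ``of length $\Theta(\log\log\Delta)$'' in the statement, which the paper leaves implicit. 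Note, however, that it is not needed for the time bound itself, since Lemma \ref{star} already applies to \emph{all} schemes of length at most $c\log\log\Delta$, including shorter ones.

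One caveat about the part you call ``the only real content.'' The view of the center $r$ is not determined by $r$'s label and the set of uniquely occurring leaf labels alone. A label class of size at least $2$ still transmits (all its members simultaneously), and such a transmission can collide with, and hence suppress, the transmission of a uniquely labeled leaf; whether this happens depends on which repeated labels are present. So two stars with the same unique-label set but different sets of repeated labels can give $r$ different histories, and the view must additionally record the set of labels occurring at least twice. This only multiplies your count by another factor $2^{2^{\ell+1}}$, so the pigeonhole step and the conclusion $\ell=\Omega(\log\log\Delta)$ survive unchanged; the paper's own Section \ref{sec:feas} argument has the same imprecision, so you are in good company, but since you singled out the view bookkeeping as the delicate point, this is the item to repair.
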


 \subsection{Maximum degree  $\Delta=2$}

We finally address the case of trees of maximum degree $\Delta=2$, i.e., the class of lines. Since there is exactly one line of a given diameter $D$, the problem of topology recognition for $\Delta=2$ and for a given diameter $D$ is trivial. A meaningful variation of the problem for $\Delta=2$ is to consider all trees (lines) of diameter
{\em at most} $D$, for a given $D$.

We first propose a topology recognition algorithm for all lines of diameter at most $D$, where $D\ge 4$,  using a labeling scheme of length $O(1)$ and  working in time $O(\log D)$.

\vspace*{0.5cm}
\noindent
{\bf Algorithm} {\tt Line-Topology-Recognition}

 Let $T$ be a tree of maximum degree $2$ and diameter at most $D$, i.e.,  a line of length at most $D$.
Let $v_1, v_2, \dots,v_{k+1}$, for $k \leq D$, be the nodes of $T$, where $v_1$ and $v_{k+1}$ are the two endpoints.
At a high level, we partition the line into segments of length $O(\log k)$ and assign labels, containing (among other terms) couples of bits, to the nodes in each segment. This is done in such a way that the concatenation of the first bits of the couples in a segment is the binary representation of the integer $k$, and the concatenation of the second
bits  of the couples in a segment is the binary representation of the segment number.  In time $O(\log k)$, every node learns the labels in each segment, and computes $k$ and the number $j \geq 0$ of the segment to which it belongs. It identifies its position in this segment from the round number in which it receives a message
for the first time. Then a node outputs the line of length $k$ with its position in it.

Below we describe the assignment of the labels to the nodes of $T$. The label of a node $v$ is a quadruple $(\alpha_v, \beta_v, \gamma_v, \delta_v)$. The term $\alpha_v$ is the binary representation of an integer from the set $\{0,1,2,3\}$ which represents the type of the node $v$, to be specified later. The term $\beta_v$ is a bit of the binary representation of the integer $k$. The  term $\gamma_v$ is a bit of the binary representation of the number $j$ of the segment. The term $\delta_v$ is
the binary representation of an integer from the set $\{0,1,2\}$ which represents the distance of the node (mod 3)  from one of the endpoints. Hence each label has a constant length. More precisely, the labels are assigned as follows.
\begin{enumerate}
\item For $1 \leq j \leq \lfloor \frac{k}{3+\lfloor \log k \rfloor}\rfloor-1$, the node $v_{j(3+\lfloor \log k \rfloor)}$ gets the label $(0,0,0,e)$, where $e=j(3+\lfloor \log k \rfloor)\mod 3$. The node $v_{k+1}$ gets the label $(0,0,0,(k+1)\mod 3 )$. These nodes are called {\it type 0} nodes.

\item For  $0\le j\le \lfloor \frac{k}{3+\lfloor \log k \rfloor}\rfloor-2$, the node $v_{j(3+\lfloor \log k \rfloor)+1}$ gets the label $(1,0,0,e)$, where $e=(j(3+\lfloor \log k \rfloor)+1) \mod 3$. These nodes are called {\it type 1} nodes.

\item For $0\le j\le \lfloor \frac{k}{3+\lfloor \log k \rfloor}\rfloor-2$, $1\le i\le 3+\lfloor \log k \rfloor-1$, the node $v_{j(3+\lfloor \log k \rfloor)+i+1}$ gets the label $(10,b_i,b_i',e)$, where $b_i$, $b_i'$ are the $i$-th bit of the binary representation of $k$ and $j$, respectively,  and $e=(j(3+\lfloor \log k \rfloor)+i+1) \mod 3$. These nodes are called {\it type 2} nodes.

\item All other nodes $v_i$ gets the label $(11,0,0,e)$, where $e=i \mod 3$. These nodes are called {\it type 3} nodes.
\end{enumerate}

Fig. \ref{fig:fig5} shows the location of nodes of different types in the line.

\begin{figure}[h]
\centering
\includegraphics[width=0.5\textwidth]{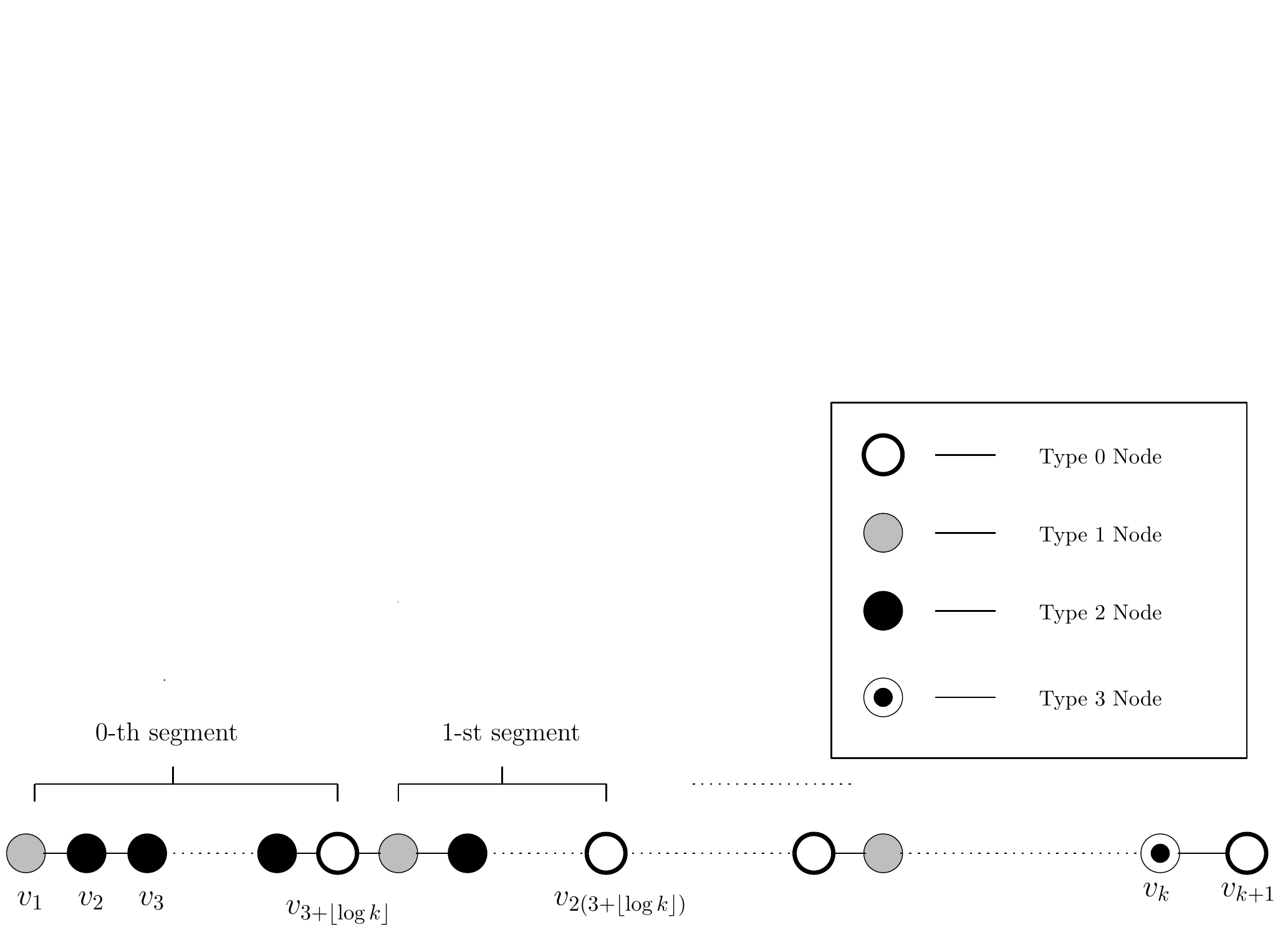}
\caption{The partition of a line into segments}
\label{fig:fig5}
\end{figure}

We now describe the algorithm for topology recognition in lines, using  the above labeling scheme. For a node $v$, if the last component of its label represents the integer $l \in \{0,1,2\}$, then a round is called {\it dedicated} to $v$, if its number is of the form $3i+l$, for  $i \ge 0$.
In the algorithm, nodes that transmit simultaneously have the same dedicated round, and hence the distance between them is at least 3. This prevents collisions.
A node $v$ can identify its type by looking at the first component of its label.

Every node $v$ keeps a variable $r_v$ which will be set to some round number in which $v$ gets a specific message.
 First, the type 1 nodes transmit the message $(0,\epsilon,\epsilon)$ in round 1, where $\epsilon$ is the empty string. After receiving a message $(0,s,s')$ for the first time in round $i$, a node $v$ of type 2 sets $r_v :=i$ and transmits the message $(0,s\cdot (\beta_v),s'\cdot (\gamma_v))$ in the next dedicated round, where `$\cdot$' denotes the concatenation operation on strings. After receiving a message $(0,s,s')$ for the first time in round $i$, a node $v$ of type 3 sets $r_v :=i$ and retransmits the message $(0,s,s')$ in the next dedicated round. After receiving a message $(0,s,s')$, where $s,s'\ne \epsilon$, for the first time in round $i$, a node $v$ of type 0 sets $r_v :=i$, and  computes the integers $k$ and $j$, whose binary representations are $s$ and $s'$, respectively. Then it outputs a line of length $k$, and identifies itself as the node $v_{jk+r_v+1}$. Finally, the node transmits the message $(k,j,\delta_v)$ in the next dedicated round.  After receiving a message $(k,j,e)$ for the first time, a node $v$ of type 2 or of type 3 learns $k$ and $j$. Then it outputs a line of length $k$, and identifies itself as the node $v_{jk+r_v+1}$. Finally,  the node transmits the message $(k,j,\delta_v)$. After receiving a message $(k,j,e)$ such that $e=(\delta_v+1) \mod 3$, a node $v$ of type 1 learns $k$ and $j$. Then it outputs a line of length $k$, and identifies itself as the node $v_{jk+1}$.

 The following lemma estimates the performance of Algorithm  {\tt Line-Topology-Recognition}.

\begin{lemma}\label{line ub}
Upon completion of Algorithm {\tt Line-Topology-Recognition} in a line of diameter at most $D$, where $D \geq 4$,
every node outputs the topology of the line and places itself in it correctly within time $ O(\log D )$.
\end{lemma}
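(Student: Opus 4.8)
The plan is to prove the lemma in four stages: collision-freeness, correctness and timing of the first (bit-gathering) phase, correctness and timing of the second (dissemination) phase, and finally the self-placement of every node. Throughout, write $m=3+\lfloor \log k\rfloor$ for the common segment length, and recall that $\delta_v$ equals the position of $v$ (counted from the endpoint $v_1$) taken modulo $3$.

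First I would isolate a purely combinatorial \textbf{collision-freeness} claim that is independent of message content: in every round no node experiences a collision, and hence every transmission by a node is received by each of its listening neighbours. The argument is that a node $v$ transmits only in rounds congruent to $\delta_v \pmod 3$; two nodes transmitting in the same round therefore have positions congruent modulo $3$, so they are at distance at least $3$ and cannot be the two neighbours of a common node (which are at mutual distance $2$). Consequently the two neighbours of any node never transmit in the same round, so whenever one of them transmits the node both listens and hears it. This single claim removes all radio-interference issues and lets the rest of the proof reason as if the line were a reliable relay chain.

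Next I would analyse the \textbf{bit-gathering phase}, the propagation of the $(0,s,s')$ messages, by induction on the offset within a segment. The key quantitative observation is that consecutive nodes have consecutive values of $\delta\bmod 3$, so the ``next dedicated round'' of a node is exactly one more than the round in which it received the message; hence the message front advances by exactly one node per round, starting from every type-$1$ node at round $1$ simultaneously. I would then prove, again by induction on the offset, that when the front reaches the $i$-th type-$2$ node of a segment the accumulated strings $s$ and $s'$ are precisely the first $i$ bits of the binary representations of $k$ and of the segment index $j$; this is immediate from the label assignment, in which the $i$-th type-$2$ node carries $b_i,b_i'$. Because every complete segment has length $m$, all type-$0$ nodes at the ends of complete segments receive the full strings $s,s'$ in the same round $m-1=O(\log k)=O(\log D)$, and each such node thereby computes $k$ and its segment index $j$ correctly.

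Finally I would treat the \textbf{dissemination phase} and self-placement together, and this is where the main obstacle lies. Having computed $k$ and $j$, each type-$0$ node broadcasts $(k,j,\delta_v)$; I must show this information flows \emph{leftward} through its segment (refilling its type-$2$ nodes and, at its head, its type-$1$ node with $k$ and the correct $j$) while the copy leaking to the type-$1$ node of the next segment is correctly rejected. That rejection is exactly what the acceptance condition $e=(\delta_v+1)\bmod 3$ encodes: a type-$1$ node accepts only the message arriving from its right neighbour, inside its own segment, so it learns its own index rather than that of the preceding segment. The delicate points needing explicit checking are (i) the placement formula, namely that a node recovers its global position as the starting position of its segment (determined by $j$) plus its within-segment offset, which the one-node-per-round propagation records in the round index $r_v$; (ii) the \emph{incomplete tail} of type-$3$ nodes together with the far endpoint $v_{k+1}$, which carry no data of their own and must acquire $k$, their index, and their $r_v$ purely from the rightward-leaking dissemination message of the last complete segment; and (iii) that this tail propagation, being of length $O(m)$, keeps the total running time at $O(\log D)$. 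Once these are verified, every node outputs the line of length $k$ and places itself correctly, and since both phases cost $O(\log D)$ rounds the time bound follows.
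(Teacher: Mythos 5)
Your first two stages match the paper's argument: the paper disposes of collisions with exactly your distance-at-least-$3$ observation (it does so in the algorithm description rather than in the proof), and its proof walks the $(0,s,s')$ front through a segment one node per round, accumulating the bits of $k$ and $j$, just as your induction does. The genuine gap is in your third stage, at precisely the point you flag as ``delicate point (ii)'' and then leave unchecked: the treatment of the type-3 nodes and the endpoint $v_{k+1}$. You posit that these nodes sit \emph{after} the type-0 node of the last complete segment and ``acquire $k$, their index, and their $r_v$ purely from the rightward-leaking dissemination message.'' That cannot be verified, because it contradicts the algorithm: a type-3 node sets $r_v$ only upon receiving a message of the form $(0,s,s')$, and in your picture no such message ever reaches the tail (its left neighbour is a type-0 node, which transmits only $(k,j,\delta)$), so the placement rule would be evaluated with $r_v$ undefined; worse, $v_{k+1}$ is itself of type 0, and the algorithm specifies no reaction of type-0 nodes to $(k,j,e)$ messages at all, so under your reading the far endpoint never outputs anything. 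No amount of routine checking closes this; the algorithm would have to be changed.

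The paper's proof handles the tail the other way around: the type-3 nodes lie \emph{inside} the last segment, between its type-2 nodes and the type-0 node terminating it, and that terminating type-0 node is $v_{k+1}$ itself. During bit-gathering the type-3 nodes therefore receive and relay the $(0,s,s')$ message unchanged---this is exactly the rule the algorithm gives them---thereby setting $r_v$ to their distance from the segment's type-1 node; then $v_{k+1}$ behaves like every other segment-ending type-0 node, computes $k$ and $j$, places itself, and launches the leftward dissemination, so your stage-3 analysis needs no special case at all. (The paper's label ranges are admittedly ambiguous about whether $v_{(N-1)m}$, with your $m=3+\lfloor\log k\rfloor$ and $N=\lfloor k/m\rfloor$, is type 0 or type 2, but the proof's sentences ``if the segment contains any type 3 node, i.e., it is the last segment\dots these nodes retransmit the message they received'' and ``every node $w$ in this segment knows its distance from $v$\dots stored in\dots $r_w$'' make the intended structure unambiguous.) With this structural correction, your remaining points (i) and (iii) do go through as in the paper; as written, however, your outline both misplaces the tail and defers, rather than completes, the one verification on which the lemma actually turns.
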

\begin{proof}
It is enough to prove that every node computes the length $k$ of the line and its position in the line correctly.
Consider a node $v$ of type 1, and the closest type 0 node $u$ which is not a neighbor  of $v$.
Call the sequence of nodes starting at $v$ and ending at $u$ a {\em segment}.
According to the assignment of labels to the nodes in $T$, the concatenations of the second and third components of the labels of the nodes of the $j$th segment, are the binary representations of the  integers $k$ and $j$, respectively.

Let us consider the nodes in the $j$-th segment of the line $T$.
According to the formulation of Algorithm {\tt Line-Topology-Recognition}, the node $v$ of type 1 transmits the message $(0,\epsilon,\epsilon)$ in round 1. This message is received by two of its neighbors, one of which is the type 0 node of the $(j-1)$-th segment and the other is a type 2 node $w_1$ of the $j$-th segment. The type 0 node of the $(j-1)$-th segment ignores this message, as the strings in this message are empty strings. The node $w_1$ transmits the message $(0,\beta_{w_1},\gamma_{w_1})$ to the next  type 2 node $w_2$, which is  a neighbor of $w_1$. The node $w_2$ transmits the message $(0, (\beta_{w_1}\beta_{w_2}), (\gamma_{w_1}\gamma_{w_2}))$ to the next type 2 node $w_3$ and so on. If the segment contains any type 3 node, i.e., it is the last segment of the line $T$, then these nodes retransmit the message they received. Also, every node $w$ in this segment knows its distance from $v$ which it stored in the variable $r_w$. This is the round number when the node $w$ received the message for the first time from a node of this segment.
When the message $(0,s,s')$ reaches the unique type 0 node $u$  of this segment, then,  according to the labeling scheme, the strings $s$ and $s'$ are the binary representations of the integers $k$ and $j$, respectively. Therefore, the type 0 node of the $j$-th segment computes $k$ and $j$ correctly. Hence, the node $u$ of type 0 identifies its position as the node $v_{jk+r_u}$. Then $u$ transmits the message $(k,j,\delta_u)$. This message is received by the type 2 or type 3 neighbor  of  $u$ in the $j$-th segment, and the type 1 neighbor $v'$ of $u$ in the $(j+1)$-th segment. Since, $\delta_{v'}=(1+\delta_u) \mod 3$, the node $v'$ ignores this message. A type 2 or type 3 node $w'$ of the $j$-th segment learns $k$ and $j$ after receiving this message, identifies itself as the node $v_{jk+r_{w'}}$, and transmits the message $(k,j,\delta_{w'})$. When the message $(k,j,e)$ reaches the node $v$, i.e, the type 1 node of the $j$-th segment, $v$ identifies itself as the node $v_{jk+1}$. Therefore, every node in the $j$-th segment computes $k$ and identifies its position in the line of length $k$ correctly.

The algorithm starts when the type 1 node of a segment transmits the message $(0,\epsilon,\epsilon)$ and ends when every type 1 node in $T$ produces its output. Since the length of each segment is $O(\log D)$, the time complexity of the algorithm is $O(\log D)$.
\end{proof}

The following lemma gives a lower bound on the time of topology recognition for lines, matching the upper bound given  in Lemma \ref{line ub}.

\begin{lemma}\label {line lb}
Let $D \ge 3$ be any integer, and let $c>0$ be any real constant.
For any line $T$,  consider a labeling scheme LABEL($T$) of length at most $c$. Let $TOPO$ be any algorithm that solves topology recognition for every line of diameter at most $D$ using the labeling scheme LABEL($T$). Then  there exists a line of diameter at most $D$, for which $TOPO $ must take time $\Omega(\log D)$.
\end{lemma}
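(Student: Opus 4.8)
The plan is to mirror the counting arguments used in Lemmas~\ref{lb:D=3} and~\ref{star}, exploiting the fact that with labels of constant length there are only constantly many distinct labels, so a single node can only accumulate exponentially few histories in $\tau$ (with a \emph{constant} base). Concretely, I would let $\cT$ be the set of all lines of diameter $d$, for $3 \le d \le D$; these are pairwise non-isomorphic, so $|\cT| = D-2 = \Omega(D)$. I would root each such line at one of its endpoints $r$. Since topology recognition requires \emph{every} node to output the topology, in particular $r$ must do so, and all the information available to $r$ within time $\tau$ is captured by its history $H(TOPO,\tau)$, as defined in Section~\ref{s2}.

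The first step is to bound the size of a history. In a line, a node at distance $i$ from $r$ cannot reach $r$ in fewer than $i$ rounds, so at most $\tau$ nodes reach $r$ within time $\tau$; hence $H(TOPO,\tau)$ is a labeled path containing at most $\tau+1$ nodes (including $r$). The second step counts the possible histories: a labeling scheme of length at most $c$ uses at most $2^{c+1}$ distinct labels, and the combinatorial type of a rooted path is determined by its number of nodes, so the number of distinct labeled histories of at most $\tau+1$ nodes is at most $(\tau+1)\,(2^{c+1})^{\tau+1} \le 2^{(c+2)(\tau+1)}$, i.e. $2^{O(\tau)}$ with a constant depending only on $c$.

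Then I would assume, for contradiction, that $TOPO$ finishes within some $\tau \le \frac{\log D}{2(c+2)}$ on every line of $\cT$. Plugging this bound into the count gives at most $2^{(c+2)(\tau+1)} = O(\sqrt{D})$ distinct histories, which is smaller than $|\cT| = \Omega(D)$ for all sufficiently large $D$. By the Pigeonhole principle, two non-isomorphic lines $T', T'' \in \cT$ then yield the same history of $r$; consequently $r$ behaves identically and outputs the same tree in both, contradicting the fact that $T'$ and $T''$ have different diameters. This forces $\tau > \frac{\log D}{2(c+2)} = \Omega(\log D)$, which is the claimed bound and matches the upper bound of Lemma~\ref{line ub}.

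The argument is essentially a clean information-theoretic counting bound, so there is no deep obstacle; the two places that require a little care are (i) justifying that at most $\tau$ nodes can reach $r$, which is immediate from the distance-based lower bound on reaching time together with the definition of reaching, and (ii) ensuring the count of histories stays exponential in $\tau$ with a \emph{constant} base. The latter is exactly what constant label length buys us, and it is the feature that separates this case from the super-constant label regimes: it turns the generic $\Omega(D)$-style obstruction into the tight $\Omega(\log D)$ bound appropriate for lines.
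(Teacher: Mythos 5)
Your proposal is correct and follows essentially the same argument as the paper's proof: an $\Omega(D)$-size family of pairwise non-isomorphic lines rooted at an endpoint, the observation that at most $\tau$ nodes can reach the root in time $\tau$ so its history is a labeled path on at most $\tau+1$ nodes, a count of at most $2^{O(\tau)}$ histories (constant base thanks to the constant label length), and a Pigeonhole contradiction. The only differences are cosmetic (the paper uses lines of length between roughly $D/2$ and $D$ and a slightly different constant in the time bound), so nothing further is needed.
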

\begin{proof}
Let $T_j$, for $j=1,2\dots, D-\lfloor\frac{D}{2}\rfloor$, be the line of length $j+\lfloor\frac{D}{2}\rfloor$, with one endpoint $r$, considered as the root.
Let $\cT$ be the set of all lines $T_j$, $j=1,2\dots, D-\lfloor\frac{D}{2}\rfloor$. Then $|\cT| \ge \frac{D}{2}$.
Consider an algorithm {\it TOPO} that solves topology recognition for every tree $T\in \cT$ in time $\tau \le \frac{\log D}{2c+2}-1$, using a labeling scheme $LABEL(T)$ of length at most $c$.
In time $\tau$, at most $\tau$ nodes can reach the node $r$. Since there are at most $2^{c+1}$ different possible
labels of length at most $c$, the total number of possible histories $H(TOPO, \tau)$ of the root $r$ is at most
$2^{(c+1)(\tau+1)} \le  2^{\frac{\log D}{2}} <\frac{D}{2} <|\cT|$, for sufficiently large $D$.

Therefore, for sufficiently large $D$, there exist two trees $T'$ and $T''$ in $\cT$ such that the roots of the two trees have the same history. Hence the root $r$ in $T'$ and the root $r$ in $T''$ output the same tree as the topology. This is a contradiction, which proves the lemma.
\end{proof}

In view of  Lemmas \ref{line ub} and  \ref{line lb}, we have the following result.

\begin{theorem}
The optimal time for topology recognition in the class of trees of maximum degree $\Delta=2$ (i.e., lines) of  diameter at most $D$,
using a labeling scheme of length $O(1)$, is $\Theta(\log $D$)$.
\end{theorem}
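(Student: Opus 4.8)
The plan is to prove this $\Theta(\log D)$ characterization by assembling the two matching bounds already established for lines, since the statement is precisely the conjunction of an achievability (upper bound) result and an unconditional (lower bound) result on the optimal topology recognition time for the class of lines of diameter at most $D$.

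First I would dispose of the upper bound. Algorithm {\tt Line-Topology-Recognition} assigns to each node a quadruple $(\alpha_v,\beta_v,\gamma_v,\delta_v)$ in which every component ranges over a fixed finite set: the type $\alpha_v$ lies in $\{0,1,2,3\}$ and is coded by at most two bits, $\beta_v$ and $\gamma_v$ are single bits, and $\delta_v$ encodes a residue modulo $3$. Hence every label has constant length and the scheme has length $O(1)$. By Lemma~\ref{line ub}, upon completion of this algorithm every node of a line of diameter at most $D$ outputs the correct topology and places itself in it, and the algorithm terminates within time $O(\log D)$. This already shows that topology recognition is solvable in time $O(\log D)$ using a scheme of length $O(1)$, which is the upper bound half of the claim.

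Next I would invoke the matching lower bound. By Lemma~\ref{line lb}, for any real constant $c$ and any labeling scheme of length at most $c$, every algorithm that solves topology recognition correctly on all lines of diameter at most $D$ must run in time $\Omega(\log D)$ on \emph{some} such line. In particular this applies to the constant-length schemes permitted in the statement, so no algorithm using an $O(1)$-length scheme can beat time $\Omega(\log D)$ in the worst case. Combining the two directions yields that the optimal time is exactly $\Theta(\log D)$.

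Since both directions are already encapsulated in Lemmas~\ref{line ub} and~\ref{line lb}, there is no genuine mathematical obstacle remaining at this stage; the proof is an assembly. The only point requiring care is to verify that the two results pertain to the \emph{same} label-length regime: the upper bound is achieved by a scheme that is manifestly of length $O(1)$ (immediate from the quadruple definition), and the lower bound is quantified over \emph{all} schemes of length at most a constant $c$. Because these regimes coincide, the bounds combine cleanly to give the stated $\Theta(\log D)$, completing the proof.
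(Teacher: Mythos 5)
Your proposal is correct and follows exactly the paper's own route: the theorem is stated there as an immediate consequence of Lemma~\ref{line ub} (the $O(\log D)$ upper bound achieved by Algorithm {\tt Line-Topology-Recognition} with its constant-length quadruple labels) and Lemma~\ref{line lb} (the $\Omega(\log D)$ lower bound for any constant-length scheme). Your additional check that both lemmas apply to the same constant label-length regime is a sensible, if brief, piece of diligence that the paper leaves implicit.
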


\section{Conclusion}\label{s4}

We established a tight bound $\Theta(\log\log \Delta)$ on the minimum length of labeling schemes permitting topology recognition in trees of maximum degree $\Delta$, and we proved upper and lower bounds on topology recognition time, using such short schemes. These bounds on time are almost tight: they leave a multiplicative gap smaller than any polynomial in $\Delta$. Closing this small gap is a natural open problem. Another interesting research topic is to extend our results to the class of arbitrary graphs. We conjecture that such results, both concerning the minimum length of labeling schemes permitting topology recognition, and concerning the time necessary for this task, may be quite different from those that hold for trees.

%
%


\end{document}